\documentclass{birkjour}
\usepackage{amsmath, tikz,amssymb,color,cite,pifont}
\usetikzlibrary{arrows,decorations.markings}

\newtheorem{thm}{Theorem}[section]
\newtheorem{cor}[thm]{Corollary}
\newtheorem{lem}[thm]{Lemma}
\newtheorem{prop}[thm]{Proposition}
\theoremstyle{definition}
\newtheorem{ex}[thm]{Example}
\newtheorem{defn}[thm]{Definition}
\theoremstyle{remark}

\newcommand{\A}{\mathcal{A}}

\newcommand{\I}{\mathfrak{I}}
\newcommand{\M}{\mathfrak{M}}

\newcommand{\Z}{\mathbb{Z}}

\newcommand{\gn}{\operatorname{gn}}
\newcommand{\bb}{\operatorname{b}} 

\tikzstyle{ball} = [circle, shading=ball, ball color=red!80!white, minimum size=0.4cm]

\title{The Information Flow Problem on Clock Networks}

\author[Atkins]{Ross Atkins}
\address{
University of Oxford \\
Department of Statistics \\
1 South Parks Road \\
Oxford OX1 3TG \\
United Kingdom}
\email{ross.atkins@univ.ox.ac.uk}

\begin{document}

\begin{abstract}
The information flow problem on a network asks whether $r$ senders, $v_1,v_2, \ldots ,v_r$ can each send messages to $r$ corresponding receivers $v_{n+1}, \ldots ,v_{n+r}$ via intermediate nodes $v_{r+1}, \ldots ,v_n$. For a given finite $R \subset \Z^+$, the clock network $N_n(R)$ has edge $v_iv_k$ if and only if $k>r$ and $k-i \in R$. We show that the information flow problem on $N_n(\{1,2, \ldots ,r\})$ can be solved for all $n \geq r$. We also show that for any finite $R$ such that $\gcd(R)=1$ and $r = \max(R)$, we show that the information flow problem can be solved on $N_n(R)$ for all $n \geq 3r^3$. This is an improvement on the bound given in \cite{Wu2009guessing} and answers an open question from \cite{Riis2007information}.
\end{abstract}

\keywords{\textbf{Keywords:} network coding, information flow, cycle graph, guessing number}

\maketitle

\section{The information Flow Problem}

The information flow problem (Definition~\ref{defn:ifp}) is an important problem for multiuser information theory. This problem was introduced in \cite{Ahlswede2000network} to formalise the multiple unicast problem. It was shown that the information flow problem is equivalent to the guessing number of a related digraph \cite{Riis2007information}. The same paper poses an open question regarding the guessing number of a class of digraphs known as clock digraphs (Definition~\ref{defn:clock_digraph}). Corollary~\ref{cor:large_n} answers this question. 

\begin{defn}
\label{defn:ifn}
A \emph{network} of length $n$ and width $r$ is an acyclic digraph $N$ with vertex set $\{ v_i \}_{i=1}^{n+r}$ such that the \emph{input nodes} (vertices $v_1,v_2, \ldots ,v_r$) have no incoming edges. Vertices $v_{n+1},v_{n+2}, \ldots ,v_{n+r}$ are called the \emph{output nodes} and vertices $v_{r+1},v_{r+2}, \ldots , v_n$ are called \emph{intermediate nodes}. For any $r < k \leq n+r$, let $\Gamma(k)$ denote set of all indices, $i$, such that $v_iv_k$ is an edge.
\end{defn}

For any positive integer $m$, let $[m]$ denote the set $\{ 1,2,3, \ldots ,m\}$. 

\begin{defn}
\label{defn:circuit}
For any network $N$ and any integer $s \geq 2$, a \emph{circuit} on $N$ over $\Z_s$, is a $n$-tuple of functions $F = (f_{r+1},f_{r+2}, \ldots ,f_{n+r})$, 
	$$ f_k : \Z_s^{\Gamma(k)} \rightarrow \Z_s \qquad \forall \; r < k \leq n+r, $$
where $n$ and $r$ are the length and width respectively of $N$. For each input $c = (c_1,c_2, \ldots ,c_r) \in \Z_s^r$, let $X = (X_1,X_2, \ldots ,X_{n+r})$ denote the unique $(n+r)$-tuple in $\Z_s^{n+r}$ such that $X_i = c_i$ for all $i \in [r]$ and 
	$$ X_k = f_k \left( X_i \: \big| \: i \in \Gamma(k) \right) \qquad \forall \; r < k \leq n+r. $$
$X$ is called the \emph{valuation} of $F$. 
\end{defn}

\begin{defn}
\label{defn:M_F}
A circuit, $F = (f_{r+1},f_{r+2}, \ldots ,f_{n+r})$, is called \emph{linear} if and only if each function $f_k$ is a linear map. For any linear circuit $F$, let $M_F$ denote the \emph{$R$-circuit matrix} of $F$; the linear map $M_F : \Z_s^r \rightarrow \Z_s^{n+r}$ such that $X = M_F(c)$ for all inputs $c \in \Z_s^r$. \emph{i.e.} $M_F$ is a $(n+r) \times r$ matrix such that 
	$$ X^T = M_F c^T $$
where $X^T$ and $c^T$ are the column vectors of the valuation $X$ and the input $c$ respectively. The first $r$ rows of the $R$-circuit matrix $M_F$ are a copy of the $r \times r$ identity matrix, $I_r$. 
\end{defn}

\begin{defn}
\label{defn:ifp} 
A network $N$ of width $r$ is $s$-\emph{solvable} if and only if there exists a circuit on $N$ over $\Z_s$ such that for all inputs $c \in \Z_s^r$, the valuation satisfies 
	$$ (X_1,X_2, \ldots ,X_r) = c = (X_{n+1},X_{n+2}, \ldots ,X_{n+r}) $$ 
A network $N$ of width $r$ is \emph{linearly $s$-solvable} if and only if there exists a linear circuit $F$ on $N$ over $\Z_s$ such that the final $r$ rows of $M_F$ are a copy of $I_r$. For a given network $N$ and an integer $s \geq 2$, the \emph{Information Flow Problem} asks whether or not $N$ is $s$-solvable. Similarly, the \emph{Linear Information Flow Problem} asks whether or not $N$ is linearly $s$-solvable.
\end{defn}

It is natural to consider the information flow problem as an information theory problem in the following way. Each input node, $v_i$, is a sender trying to send a message to its corresponding receiver at node $v_{n+i}$ via the network of internal nodes. The elements of the group $\Z_s$ correspond to the $s$ distinct possible messages that could be sent along each edge. There is a traditional method for solving the information flow problem, called ``routing'', in which each intermediate node simply passes on one of the messages it receives. A network can only be solved by routing if and only if there exist vertex disjoint paths from each sender to its corresponding receiver. There are many examples in which a network is solvable, but cannot be solved by routing alone \cite{Cannons2006network,Li2003linear}. Instead we allow each non-input node, $v_k$, to perform some function, $f_k$, on the messages it receives from nodes $\Gamma(k)$. Each node $v_i$ must send the same message to all nodes $v_k$ such that $i \in \Gamma(k)$. Linear circuits are of interest because they are fast to compute and linear circuits are sufficient to solve a large family of networks (Theorems~\ref{thm:full_clock} and~\ref{thm:large_n}). 

The information flow problem also has an application to computing the guessing number \cite{Christofides2011guessing,Riis2006utilising} and the information defect \cite{Alon2008broadcasting,Gadouleau2011graph,Riis2007graph} of directed graphs. Specifically, for any network $N$ with input nodes $v_1,v_2, \ldots ,v_r$ and output nodes $v_{n+1},v_{n+2}, \ldots ,v_{n+r}$, let $G_N$ denote the digraph obtained by identifying vertex $v_i$ with $v_{n+i}$ for all $1 \leq i \leq r$. The relationship between the $s$-solvability of a network $N$ and the guessing number (and information defect) of $G_N$ is presented in Theorem~\ref{thm:information_defect} which originally appears in \cite{Riis2007information}. Note that for our purposes it does not matter if edge $v_{n+i}v_k$ is replaced with $v_iv_k$ (nor would it make any difference if both edges were included) because, for a circuit which solves the network, the valuation would satisfy $X_i=X_{n+i}$. 

\begin{figure}
\begin{center}
\begin{tikzpicture}
\tikzset{myptr/.style={decoration={markings,mark=at position 1 with 
    {\arrow[scale=2,>=stealth]{>}}},postaction={decorate}}}
\draw[fill=black!10,black!10] (1.5,0) circle (2);
\draw[fill=black!10,black!10] (9,0) circle (2);
\draw[myptr] (3,0) -- (4.3,0);
\draw[myptr] (4.5,0) -- (5.8,0);
\draw[myptr] (6,0) -- (7.3,0);
\draw[myptr] (7.5,0) -- (8.8,0);
\draw[myptr] (9,0) -- (10.3,0);
\draw[myptr] (2.25,1.5) to [out=0,in=120] (4.4,0.2);
\draw[myptr] (3.75,1.5) to [out=0,in=120] (5.9,0.2);
\draw[myptr] (5.25,1.5) to [out=0,in=120] (7.4,0.2);
\draw[myptr] (6.75,1.5) to [out=0,in=120] (8.9,0.2);
\draw[myptr] (8.25,1.5) to [out=0,in=120] (10.4,0.2);
\draw (0,0) to [out=60,in=180] (2.25,1.5);
\draw (1.5,0) to [out=60,in=180] (3.75,1.5);
\draw (3,0) to [out=60,in=180] (5.25,1.5);
\draw (4.5,0) to [out=60,in=180] (6.75,1.5);
\draw (6,0) to [out=60,in=180] (8.25,1.5);
\node[style=ball] at (0,0) {};
\node[style=ball] at (1.5,0) {};
\node[style=ball] at (3,0) {};
\node[style=ball] at (4.5,0) {};
\node[style=ball] at (6,0) {};
\node[style=ball] at (7.5,0) {};
\node[style=ball] at (9,0) {};
\node[style=ball] at (10.5,0) {};
\node [below] at (0,-0.15) {$v_1$};
\node [below] at (1.5,-0.15) {$v_2$};
\node [below] at (3,-0.15) {$v_3$};
\node [below] at (4.5,-0.15) {$v_4$};
\node [below] at (6,-0.15) {$v_5$};
\node [below] at (7.5,-0.15) {$v_6$};
\node [below] at (9,-0.15) {$v_7$};
\node [below] at (10.5,-0.15) {$v_8$};
\node at (1.5,-1.2) {input nodes};
\node at (9,-1.2) {output nodes};
\end{tikzpicture}
\end{center}
\caption{The network $N_5(\{1,3\})$ has $3$ input nodes ($v_1,v_2,v_3$) and $3$ output nodes ($v_6,v_7,v_8$) and $2$ intermediate nodes ($v_4,v_5$).}
\end{figure}
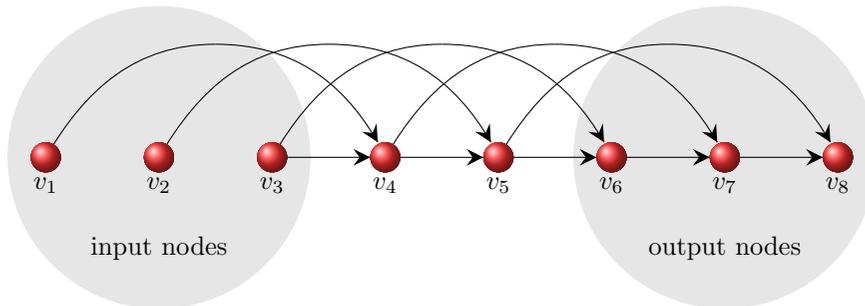

\begin{thm}\cite{Riis2007information}
\label{thm:information_defect}
For any network $N$ of length $n$ and width $r$, if the guessing number of $G_N$ is denoted $\gn(G_N,s)$ and the information defect of $G_N$ is denoted $\bb(G_N,s)$, then 
	$$ \gn(G_N,s) \leq r \qquad \mbox{and} \qquad \bb(G_N,s) \geq n-r. $$
We get the equality $\gn(G_N,s) = r$ if and only if $N$ is $s$-solvable. Moreover, if $N$ is linearly $s$-solvable then $\bb(G_N,s) = n-r$.
\end{thm}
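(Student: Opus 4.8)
The plan is to recognise the guessing game on $G_N$ as a repackaging of the information flow problem on $N$. Write $w_i$ for the vertex of $G_N$ obtained by identifying $v_i$ with $v_{n+i}$, so that $G_N$ has $n$ vertices: the $r$ ``merged'' vertices $w_1,\dots,w_r$ and the $n-r$ intermediate vertices $v_{r+1},\dots,v_n$, whose induced subgraph is acyclic. Recall that $\gn(G_N,s)$ is the $\log_s$ of the maximum number of configurations in $\Z_s^n$ simultaneously fixed by a guessing strategy --- one in which each vertex guesses its own value as a function of the values on its in-neighbours --- while $\bb(G_N,s)$ is the $\log_s$ of the minimum size of a message set $M$ admitting a public map $P\colon\Z_s^n\to M$ after which each vertex can recover its own value from $P$ together with the values on its in-neighbours. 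The structural fact I would isolate first is that the intermediate vertices can be ordered so that each $v_k$ has, in $G_N$, in-neighbours only among $\{w_1,\dots,w_r\}$ and strictly earlier intermediate vertices; this holds because $N$ is acyclic and identifying $v_i$ with $v_{n+i}$ changes no intermediate in-neighbourhood except to rename sources and sinks.

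Both displayed inequalities then fall out of one counting argument. For any guessing strategy, a fixed configuration $x$ satisfies $x_{v_k}=g_{v_k}\big((x_i)_{i\in\Gamma(k)}\big)$ at every intermediate $v_k$, so processing the intermediate vertices in the above order shows that $x$ is completely determined by the tuple $(x_{w_1},\dots,x_{w_r})\in\Z_s^r$; hence there are at most $s^r$ fixed configurations and $\gn(G_N,s)\le r$. Running the same argument inside a fibre $P^{-1}(p)$ of a valid public map --- where now the known value $p$ may appear as an extra argument of the reconstruction functions --- gives $|P^{-1}(p)|\le s^r$ for every $p\in M$, whence $s^n=\sum_{p}|P^{-1}(p)|\le |M|\,s^r$ and $\bb(G_N,s)\ge n-r$.

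For the equivalence I would use the dictionary that assigns to a circuit $F=(f_{r+1},\dots,f_{n+r})$ the guessing strategy on $G_N$ whose guessing function at an intermediate vertex $v_k$ is $f_k$ and whose guessing function at a merged vertex $w_i$ is $f_{n+i}$. Under this dictionary, for each input $c$ the valuation $X$ of $F$, read on the vertices of $G_N$, is a fixed configuration exactly when $X_{n+i}=c_i=X_i$ for all $i$ --- that is, exactly when $F$ witnesses $s$-solvability --- and distinct inputs give fixed configurations that differ on the $w_i$; so an $s$-solving circuit yields $s^r$ fixed configurations and $\gn(G_N,s)=r$. Conversely, if $\gn(G_N,s)=r$ there are exactly $s^r$ fixed configurations, so the restriction map $x\mapsto(x_{w_1},\dots,x_{w_r})$ --- always injective by the counting above --- is a bijection, and reading the dictionary backwards produces a circuit whose valuation satisfies the solvability condition for every $c$. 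I expect the fiddliest part to be the consistency check at the merged vertices: one must verify that the functions on the output nodes of $N$ simultaneously certify the guessing strategy at the $w_i$ and the equalities $X_{n+i}=c_i$, which requires handling the output nodes too in a topological order.

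Finally, suppose $N$ is linearly $s$-solvable, so that $M_F$ carries $I_r$ as both its first $r$ and its last $r$ rows and the associated guessing strategy is linear, with linear guessing functions. I would form the linear residual map $P\colon\Z_s^n\to\Z_s^n$ that replaces each coordinate $x_w$ by $x_w$ minus the value the strategy guesses at $w$ (a linear function of the values of $x$ on the in-neighbours of $w$). Its kernel is precisely the set of fixed configurations of the strategy, which, by the counting argument together with the solvability just established, equals the rank-$r$ submodule $W\subseteq\Z_s^n$ formed by the valuations of $F$ read on the vertices of $G_N$; since the first $r$ rows of $M_F$ form $I_r$, $W$ is a free direct summand, so $\Z_s^n/W\cong\Z_s^{n-r}$ and the image of $P$ has exactly $s^{n-r}$ elements. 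Broadcasting $P(x)$ is therefore a public protocol with message set of size $s^{n-r}$, and it is valid because every vertex $w$ reads off its coordinate of $P(x)$ from the broadcast and then recovers $x_w$ as that coordinate plus the value it guesses from its in-neighbours; hence $\bb(G_N,s)\le n-r$, and with the earlier bound this is an equality. The one point that needs care is that $s$ may be composite, so the counting of $|\Z_s^n/W|$ should be justified module-theoretically via the $I_r$ block making $W$ a free direct summand, rather than by dimension over a field.
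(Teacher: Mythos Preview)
The paper does not prove this theorem; it is quoted from \cite{Riis2007information} and used as a black box, so there is no in-paper argument to compare against. Your proposal is therefore supplying a proof that the paper omits.

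That said, your sketch is essentially the standard argument and is sound. The key structural observation---that the intermediate vertices of $G_N$ inherit an acyclic order from $N$, so any fixed configuration is determined by its values on the merged vertices $w_1,\dots,w_r$---immediately gives both $\gn(G_N,s)\le r$ and, via the fibre count, $\bb(G_N,s)\ge n-r$. Your dictionary between circuits on $N$ and guessing strategies on $G_N$ is the right one, and you correctly flag the only delicate point: when translating a strategy with $s^r$ fixed configurations back into a solving circuit, the output nodes $v_{n+i}$ may have other output nodes among their in-neighbours, so the verification $X_{n+i}=c_i$ must be carried out in a topological order on the outputs, using the inductive hypothesis $X_{n+j}=c_j$ for earlier $j$. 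Once stated, this goes through without difficulty.

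For the information-defect equality in the linear case, your residual map $P(x)_w = x_w - g_w(x|_{\Gamma(w)})$ is exactly the right construction, and your care about composite $s$ is warranted: the point that the $I_r$ block in $M_F$ makes the fixed-configuration submodule $W$ a free direct summand of $\Z_s^n$, so that $|\Z_s^n/W|=s^{n-r}$ without appealing to dimension over a field, is the correct justification. Nothing is missing.
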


\begin{defn}
\label{defn:clock_network}
For any finite $R \subset \Z^+$ let $r = \max(R)$. For any integer $n>r$ let $N_n(R)$ denote the \emph{clock} network; the network with vertex set is $V = \{ v_1,v_2,v_3, \ldots ,v_{n+r} \}$ and edge set 
	$$ E = \{ v_iv_k \: | \: k>r \mbox{ and } k-i \in R \}. $$
The network $N_n([r])$ is called the \emph{full} clock network. To simplify notation, we sometimes write 
	$$ N_n(r) = N_n([r]). $$
\end{defn}

\begin{defn}
\label{defn:clock_digraph}
For any finite $R \subset \Z^+$ let $r = \max(R)$. For any integer $n>r$ let $G_{clock}(n,R)$ denote the \emph{clock} digraph which has $n$ vertices $\{ v_i \}_{i=1}^n$, where $v_iv_j$ is an edge if and only if $j-i$ (modulo $n$) is in $R$. To simplify notation, for any positive integer $r$, we say 
	$$ G_{clock}(n,r) = G_{clock}(n,[r]). $$
\end{defn}

The clock network inherits it's name from the clock digraph, $G_{clock}(n,R)$, as defined in \cite{Riis2007information}. When $|R| = 2$, the clock digraph is also known as the Cayley graph $\mbox{Cay}(n,R)$, or the ``shift graph'' \cite{Wu2009guessing}. The clock digraph, $G_{clock}(n,R)$, can be obtained from the clock network, $N_n(R)$, by identifying nodes $v_i$ and $v_{n+i}$ for all $1 \leq i \leq r$. \emph{i.e.} 
	$$ G_{clock}(n,R) = G_{N_n(R)}. $$
We show in Theorem~\ref{thm:full_clock} that $N_n(r)$ is always linearly $s$-solvable. By Theorem~\ref{thm:information_defect} (which originally appears in \cite{Riis2007information}) this implies that the guessing number and information defect of $G_{clock}(n,r)$ are $r$ and $n-r$ respectively. 

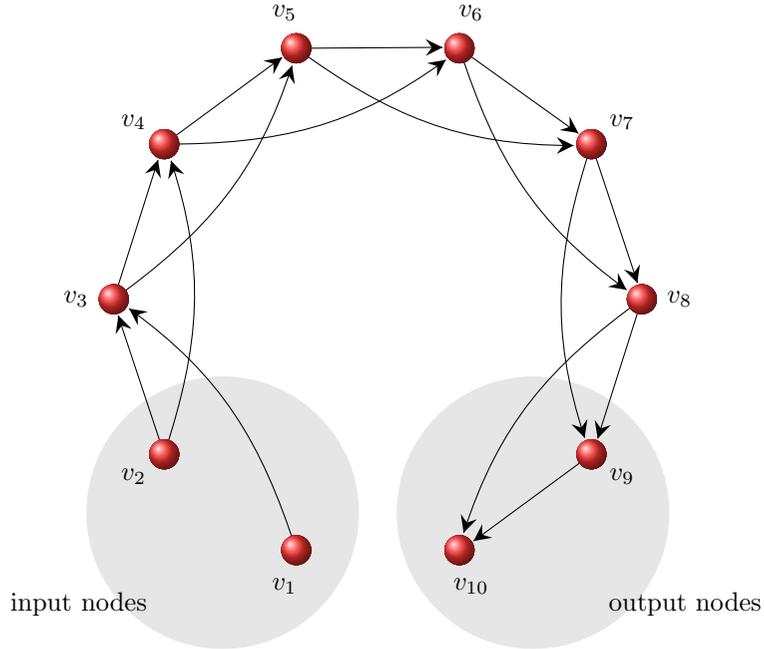
\begin{figure}
\begin{center}
\begin{tikzpicture}
\tikzset{myptr/.style={decoration={markings,mark=at position 1 with 
    {\arrow[scale=2,>=stealth]{>}}},postaction={decorate}}}
\draw[fill=black!10,black!10] (6.5*36:3.5) circle (1.8);
\draw[fill=black!10,black!10] (8.5*36:3.5) circle (1.8);
\draw[myptr] (6*36:3.5) -- (5.1*36:3.45);
\draw[myptr] (5*36:3.5) -- (4.1*36:3.45);
\draw[myptr] (4*36:3.5) -- (3.1*36:3.45);
\draw[myptr] (3*36:3.5) -- (2.1*36:3.45);
\draw[myptr] (2*36:3.5) -- (1.1*36:3.45);
\draw[myptr] (1*36:3.5) -- (0.1*36:3.45);
\draw[myptr] (0*36:3.5) -- (9.1*36:3.45);
\draw[myptr] (9*36:3.5) -- (8.1*36:3.45);
\draw[myptr] (7*36:3.5) to [out=3*36,in=9*36] (5.06*36:3.3);
\draw[myptr] (6*36:3.5) to [out=2*36,in=8*36] (4.06*36:3.3);
\draw[myptr] (5*36:3.5) to [out=1*36,in=7*36] (3.06*36:3.3);
\draw[myptr] (4*36:3.5) to [out=0*36,in=6*36] (2.06*36:3.3);
\draw[myptr] (3*36:3.5) to [out=9*36,in=5*36] (1.06*36:3.3);
\draw[myptr] (2*36:3.5) to [out=8*36,in=4*36] (0.06*36:3.3);
\draw[myptr] (1*36:3.5) to [out=7*36,in=3*36] (9.06*36:3.3);
\draw[myptr] (0*36:3.5) to [out=6*36,in=2*36] (8.06*36:3.3);
\node[style=ball] at (0*36:3.5) {};
\node[style=ball] at (1*36:3.5) {};
\node[style=ball] at (2*36:3.5) {};
\node[style=ball] at (3*36:3.5) {};
\node[style=ball] at (4*36:3.5) {};
\node[style=ball] at (5*36:3.5) {};
\node[style=ball] at (6*36:3.5) {};
\node[style=ball] at (7*36:3.5) {};
\node[style=ball] at (8*36:3.5) {};
\node[style=ball] at (9*36:3.5) {};
\node at (0*36:4) {$v_8$};
\node at (1*36:4) {$v_7$};
\node at (2*36:4) {$v_6$};
\node at (3*36:4) {$v_5$};
\node at (4*36:4) {$v_4$};
\node at (5*36:4) {$v_3$};
\node at (6*36:4) {$v_2$};
\node at (7*36:4) {$v_1$};
\node at (8*36:4) {$v_{10}$};
\node at (9*36:4) {$v_9$};
\node [left] at (6.5*36:5) {input nodes};
\node [right] at (8.5*36:5) {output nodes};
\end{tikzpicture}
\end{center}
\caption{The full clock network $N_8(2)$.}
\end{figure}

\begin{prop}
\label{prop:M_F}
For a given finite $R \subset \Z^+$, let $r = \max(R)$, let $M$ be a $(n+r) \times r$ matrix with entries in $\Z_s$ and for $i=1,2, \ldots ,n+r$, let $\omega(i)$ be the $i^{\mbox{\scriptsize th}}$ row of $M$. If 
\begin{itemize}
\item the first $r$ rows of $M$ form a copy of the identity matrix $I_r$, and
\item for all $r < k \leq n+r$, the row $\omega(k)$ is a linear combination of the rows $\{ \omega(i) \: | \: k-i \in R \}$, 
\end{itemize}
then there exists a circuit, $F$, on $N_n(R)$ such that $M_F=M$.
\end{prop}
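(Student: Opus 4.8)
The plan is to build the circuit $F$ directly out of the linear dependencies asserted in the second hypothesis, and then to verify by induction on the vertex index that the resulting circuit matrix is exactly $M$.

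First, for each $k$ with $r < k \le n+r$, the hypothesis provides scalars $\lambda_{k,i} \in \Z_s$, indexed by those $i$ with $k-i \in R$, such that $\omega(k) = \sum_{i \,:\, k-i \in R} \lambda_{k,i}\,\omega(i)$. By Definition~\ref{defn:clock_network} the in-neighbourhood of $v_k$ in $N_n(R)$ is precisely $\Gamma(k) = \{\, i : k-i \in R \,\}$, so the index set of this sum is exactly $\Gamma(k)$. I then take $f_k : \Z_s^{\Gamma(k)} \to \Z_s$ to be the linear map $f_k\big((y_i)_{i \in \Gamma(k)}\big) = \sum_{i \in \Gamma(k)} \lambda_{k,i}\, y_i$, and set $F = (f_{r+1},f_{r+2},\ldots,f_{n+r})$. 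Each $f_k$ is linear, so $F$ is a linear circuit and $M_F$ is well defined in the sense of Definition~\ref{defn:M_F}.

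Next I claim that, regarding each $\omega(i)$ as a row vector, the valuation of $F$ on an input $c \in \Z_s^r$ satisfies $X_k = \omega(k)\,c^T$ for all $1 \le k \le n+r$; this gives $M_F = M$. I prove the claim by induction on $k$. For $k \le r$, the first hypothesis says $\omega(k)$ is the $k$th row of $I_r$, and indeed $X_k = c_k = \omega(k)\,c^T$. For $k > r$, every $i \in \Gamma(k)$ satisfies $k - i \in R$, hence $k - r \le i \le k-1$, and since $k > r$ this forces $1 \le i < k$, so the inductive hypothesis applies to each such $i$; using Definition~\ref{defn:circuit},
$$ X_k = \sum_{i \in \Gamma(k)} \lambda_{k,i}\, X_i = \sum_{i \in \Gamma(k)} \lambda_{k,i}\, \omega(i)\,c^T = \Big( \sum_{i \in \Gamma(k)} \lambda_{k,i}\, \omega(i) \Big) c^T = \omega(k)\,c^T. $$
Therefore $X^T = M c^T$ for every $c \in \Z_s^r$, where $M$ is the matrix with rows $\omega(1),\ldots,\omega(n+r)$; comparing with the defining identity $X^T = M_F c^T$ of Definition~\ref{defn:M_F}, which also holds for all $c$, we conclude $M_F = M$.

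I do not expect a real obstacle: the construction is forced and the verification is a short induction. The one place that needs care is the bookkeeping --- checking that the index set $\{\, i : k-i \in R \,\}$ appearing in the hypothesis coincides with $\Gamma(k)$, and that $\Gamma(k) \subseteq \{1,\ldots,k-1\}$ whenever $k > r$, so that the recursion defining the valuation is well-founded and the induction is legitimate. It is also worth noting that the argument uses nothing about $\Z_s$ being a field, since the coefficients $\lambda_{k,i}$ are handed to us by the hypothesis rather than produced by any linear-algebra argument.
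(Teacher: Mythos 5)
Your proposal is correct and follows essentially the same route as the paper: extract the coefficients $\lambda$ from the hypothesised linear dependencies, define each $f_k$ as the corresponding linear map, and verify $M_F = M$ by induction on the vertex index. The only cosmetic difference is that you run the induction on the valuation identity $X_k = \omega(k)\,c^T$ rather than directly on the rows of $M_F$, and you make explicit the (correct) well-foundedness check that $\Gamma(k) \subseteq \{1,\ldots,k-1\}$, which the paper leaves implicit.
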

\begin{proof}
For $k = r+1,r+2, \ldots ,n+r$, and $j \in R$, let $\lambda_{kj} \in \Z_s$ be the constants by which $\omega(k)$ is a linear combination of $\{ \omega(k-j) \: | \: j \in R \}$. \emph{i.e.} 
	$$ \omega(k) = \sum_{j \in R} \lambda_{kj} \omega(k-j). $$
For all pairs $(k,j)$ such that $k-j \not\in R$ we set $\lambda_{kj}=0$. Now let $F = (f_{r+1},f_{r+2}, \ldots ,f_{n+r})$ be the circuit on $N_n(R)$ defined by 
	$$ X_k = f_k \left( X_i \: \big| \: i \in \Gamma(k) \right) = \sum_{j \in R} \lambda_{kj} X_{k-j}, $$
and for $i=1,2, \ldots ,n+r$, let $\omega_i^\prime$ be the $i^{\mbox{\scriptsize th}}$ row of $M_F$. Since the first $r$ rows of any $R$-circuit matrix form a copy of $I_r$, we must have $\omega(i) = \omega^\prime(i)$ for $i=1,2, \ldots ,r$. Then, inductively, for all $k>r$ we must have 
	$$ \omega(k) = \sum_{j \in R} \lambda_{kj} \omega(k-j) = \sum_{j \in R} \lambda_{kj} \omega^\prime(k-j) = \omega^\prime(k). $$
\end{proof}

\section{Full Clock Networks}

As Theorem~\ref{thm:full_clock} shows, the full clock network is linearly $s$-solvable for all $s$. This is equivalent to Proposition A in \cite{Riis2007information}, however their proof is incomplete (see Example~\ref{ex:Riis_incomplete}). We show that the full clock network is linearly $s$-solvable by finding a valid $[r]$-circuit matrix explicitly. 

\begin{defn}
\label{defn:GIM}
For any integers $a,b>0$ we can define $\I_{a,b}$ in the following recursive manner. If $a=b$, then $\I_{a,a} = I_a$ (the $a \times a$ identity matrix). Otherwise:
	$$ \mbox{if } a<b \mbox{ then } \I_{a,b} = \Big[ \I_{a,b-a} , I_a \Big] , \quad \mbox{and if } a>b \mbox{ then } \I_{a,b} = \left[ \begin{array}{c} \I_{a-b,b} \\ I_b \end{array} \right]. $$ 
So if $a>b$ or $a<b$, then $\I_{a,b}$ is either the horizontal concatenation of $\I_{a,b-a}$ and $I_a$ or the the vertical concatenation of $\I_{a-b,b}$ and $I_b$ respectively. For example: $\I_{4,3}$ and $\I_{30,43}$ are depicted in Figure~\ref{fig:GIM}.
\end{defn}

\begin{figure}
\label{fig:GIM}
  $$ \I_{4,3} = \left[ \begin{array}{ccc} 
  1 & 1 & 1 \\ 
  1 & 0 & 0 \\ 
  0 & 1 & 0 \\ 
  0 & 0 & 1 
    \end{array} \right] \qquad\qquad \I_{30,43} = \tikz[baseline=10ex, scale = 0.1]
  {
  	\draw (0,0) --(43,0) --(43,30) --(0,30) --(0,0); 
	\draw (13,0) --(13,30);
	\draw (0,13) --(13,13);
	\draw (0,26) --(13,26);
	\draw (9,26) --(9,30);
	\draw (5,26) --(5,30);
	\draw (1,26) --(1,30);
	\draw (0,27) --(1,27);
	\draw (0,28) --(1,28);
	\draw (0,29) --(1,29);
	\node at (6.5,6.5) {$I_{13}$};
	\node at (6.5,19.5) {$I_{13}$};
	\node at (28,15) {$I_{30}$};
  }
  $$
\caption{The matrices $\I_{4,3}$ and $\I_{30,43}$. Each square represents a copy of an identity matrix.}
\end{figure}

\begin{prop}
\label{prop:detA=1}
If $A$ is the topleft-most $a \times a$ sub-matrix of $\I_{n,r}$, then $|\det(A)|=1$. 
\end{prop}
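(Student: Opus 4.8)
The plan is to prove, by strong induction on $p+q$, the following generalisation: for all positive integers $p,q$ and every $a\le\min(p,q)$, the topleft $a\times a$ submatrix of $\I_{p,q}$ has determinant $\pm1$ over $\Z$ (in particular it is invertible over $\Z_s$); the Proposition is the case $(p,q)=(n,r)$. Two preliminary facts set this up. First, the defining recursion is transpose-dual: $\I_{p,q}^{T}=\I_{q,p}$, by a one-line induction on $p+q$ (for $p>q$ both $\I_{p,q}$ and $\I_{q,p}^{T}$ equal $\left[\begin{smallmatrix}\I_{p-q,q}\\ I_{q}\end{smallmatrix}\right]$), and transposing a matrix transposes each of its topleft square submatrices without changing determinants; so I may assume $p\ge q$, hence $a\le q\le p$. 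Second, for any $p,q$ and any $k\le\min(p,q)$ the \emph{bottom-right} $k\times k$ submatrix of $\I_{p,q}$ is exactly $I_{k}$: this is clear for $p=q$, and for $p\ne q$ this submatrix already coincides with the bottom-right $k\times k$ corner of the trailing copy of $I_{\min(p,q)}$ appearing in the definition, which is $I_{k}$.

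Now for the induction. The base case $p=q$ is immediate since $\I_{p,p}=I_{p}$. For $p>q$, write $\I_{p,q}=\left[\begin{smallmatrix}\I_{p-q,q}\\ I_{q}\end{smallmatrix}\right]$, put $m=p-q\ge1$, and let $A$ be the topleft $a\times a$ submatrix, where $a\le q$. If $a\le m$, then $A$ is literally the topleft $a\times a$ submatrix of $\I_{m,q}$ and we conclude by induction. Otherwise $m<a\le q$, so $m<q$ and $\I_{m,q}=[\I_{m,q-m},I_{m}]$, while rows $m+1,\dots,a$ of $A$ are the unit vectors $e_{1},\dots,e_{a-m}$. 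If in addition $a\le q-m$, one checks directly that $A$ also equals the topleft $a\times a$ submatrix of $\I_{q,\,q-m}=\left[\begin{smallmatrix}\I_{m,q-m}\\ I_{q-m}\end{smallmatrix}\right]$ — both are the first $a$ columns of $\I_{m,q-m}$ placed above $e_{1},\dots,e_{a-m}$ — and we conclude by induction since $q+(q-m)<p+q$. If instead $a>q-m$, use the rows $e_{1},\dots,e_{a-m}$ to clear the first $a-m$ columns of the top $m$ rows of $A$; this elementary operation preserves $|\det A|$, and after a block column swap $|\det A|=|\det D|$, where $D$ is the $m\times m$ submatrix of $[\I_{m,q-m},I_{m}]$ consisting of all $m$ rows and columns $a-m+1,\dots,a$. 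If $a=q$ these are the last $m$ columns, so $D=I_{m}$. If $q-m<a<q$ they straddle the join, so $D$ has the block form $\left[\begin{smallmatrix}\ast & I\\ N & 0\end{smallmatrix}\right]$ with $N$ the bottom-right $(q-a)\times(q-a)$ submatrix of $\I_{m,q-m}$; one more block column swap gives $|\det D|=|\det N|=1$ by the second preliminary fact. Thus $|\det A|=1$ in every case.

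The only real obstacle is bookkeeping: getting the case split right — on $a$ versus the overhang $m=p-q$, and then on $a$ versus $q-m$ — and matching the row/column index ranges carefully enough to identify each reduced minor either with a topleft minor of a strictly smaller $\I_{p',q'}$ or with an explicit identity block. The structural fact that makes the awkward straddling case collapse, and indeed the reason the Proposition is true at all, is that every bottom-right square submatrix of $\I_{p,q}$ is an identity matrix.
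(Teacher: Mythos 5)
Your proof is correct, but it takes a genuinely different route from the paper's. The paper does not induct: it runs the subtractive Euclidean algorithm on $(n,r)$ until both parameters drop to at most $a$, landing on a pair $(p,q)$ with $a<p+q$ whose $\I_{p,q}$ occupies the top-left corner of $A$; it then argues that the complementary blocks of $A$ are slices of identity matrices with the bottom-right block equal to zero, and exhibits $A$ in a single explicit anti-triangular block form $\left[\begin{smallmatrix} * & * & I_{a-q} \\ * & I_{p+q-a} & 0 \\ I_{a-p} & 0 & 0\end{smallmatrix}\right]$, from which the Leibniz formula gives $|\det A|=1$ in one step. You instead set up a strong induction on $p+q$ that peels off one layer of the recursion at a time, use the transpose duality $\I_{p,q}^{T}=\I_{q,p}$ (a tidy observation the paper never states) to halve the case analysis, and only in the terminal straddling case perform an explicit computation via row operations and block column swaps. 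Both arguments pivot on the same structural fact --- every bottom-right square submatrix of $\I_{p,q}$ is an identity matrix, recorded in the paper as Equation~(\ref{eq:Ipq}) --- so the underlying reason the result holds is identified identically. What the paper's approach buys is brevity and a single memorable picture; what yours buys is that every reduction is an elementary, fully checkable identification of matrices (the paper's claims that $S=0$ and that $P$ and $Q$ are pieces of identity matrices are stated somewhat loosely), at the cost of heavier case bookkeeping. I verified your case split ($a$ against $m=p-q$, then against $q-m$) and each determinant evaluation; the argument is complete.
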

\begin{proof}
Let $A$ be the topleft-most $a \times a$ submatrix of $\I_{n,r}$. We now construct the pair of integers $p$ and $q$ in the following way. Initially let $x=n$ and $y=r$. Then iteratively perform the following process.
	\begin{align*} 
	\mbox{while } & x>a \mbox{ or } y>a: \\
	& \mbox{if $x>y$ replace $x$ with $x-y$,} \\
	& \mbox{otherwise replace $y$ with $y-x$.} 
	\end{align*}
Throughout this process (by Definition~\ref{defn:GIM}) topleft-most $x \times y$ submatrix of $A$ is always a copy of $\I_{x,y}$. As soon as both $x$ and $y$ are less than or equal to $a$, we set $p=x$ and $q=y$, and terminate this process. Just before the final iteration, we must have had one of $x$ or $y$ greater than $a$, so $a < \max(x,y) = p+q$. Now $A$ must be in the following form. 
	\begin{equation}
	\label{eq:A}
	A = \left[ \begin{array}{cc} \I_{p,q} & P \\ Q & S \end{array} \right] 
	\end{equation}
where $P$, $Q$ and $S$ are matrices with dimensions $p \times (a-q)$, $(a-p) \times q$ and $(a-p) \times (a-q)$ respectively. Now, there are two cases: 
\begin{itemize}
\item If $P$ is the left-most $(a-q)$ columns of a copy of $I_p$, then $[Q,S]$ is the topleft-most $(a-p) \times a$ submatrix of a large identity matrix.
\item If $Q$ is the top-most $(a-p)$ rows of a copy of $I_q$, then $\left[ \begin{array}{c} P \\ S \end{array} \right]$ is the topleft-most $a \times (a-q)$ submatrix of a large identity matrix.
\end{itemize}
In either case, all the entries of $S$ must be zero because $a-p<q$ and $a-q<p$. Now consider $\I_{p,q}$ in which the bottomright-most square submatrix must be a copy of an identiy matrix. Explicitly, for any integer $b$ such that $0 \leq b \leq \min(p,q)$, we have 
	\begin{equation}
	\label{eq:Ipq}
	\I_{p,q} = \left[ \begin{array}{cc} * & * \\ * & I_b \end{array} \right] , 
	\end{equation}
where $*$ denotes arbitrary entries. In particular $\I_{p,q}$ has this form for $b=p+q-a$. Substituting Equation~(\ref{eq:Ipq}) into Equation~(\ref{eq:A}), we see that $A$ has the form:
	$$ A = \left[ \begin{array}{cc} \I_{p,q} & P \\ Q & 0 \end{array} \right] = \left[ \begin{array}{ccc} * & * & I_{a-q} \\ * & I_{p+q-a} & 0 \\ I_{a-p} & 0 & 0 \end{array} \right] $$
where a $0$ denotes a submatrix full of zeros, and $*$ denotes a submatrix with arbitrary entries. The only non-zero terms in the Leibniz formula for the determinant of $A$ must come exclusively from the submatrices labelled $I_{a-q}$, $I_{p+q-a}$ and $I_{a-p}$. Therefore 
  $$ |\det(A)| = |\det(I_{a-q})| \times |\det(I_{p+q-a})| \times |\det(I_{a-p})| = 1. $$
\end{proof}

\begin{figure}
\begin{center}
\begin{tikzpicture}[baseline=10ex, scale = 0.3]
  \draw[ultra thick, <->] (0,0) -- (0,9) -- (13,9); 
  \draw[thick] (0,2) -- (7,2) -- (7,9);
  \draw [fill=yellow,ultra thick] (0,4) rectangle (6,9);
  \draw (6,4) -- (11,4) --(11,9);
  \node at (3,6.5) {$\I_{p,q}$};
  \node at (8.5,6.5) {$I_p$};
  \draw[<->] (-1,2) -- (-1,9); 
  \node[left] at (-1,5.5) {$a$};
  \draw[<->] (0,10) -- (7,10); 
  \node[above] at (3.5,10) {$a$};
\end{tikzpicture}
$\qquad$ or $\qquad$
\begin{tikzpicture}[baseline=10ex, scale = 0.3]
  \draw[ultra thick, <->] (0,-3) -- (0,9) -- (10,9); 
  \draw[thick] (0,2) -- (7,2) -- (7,9);
  \draw [fill=yellow,ultra thick] (0,4) rectangle (6,9);
  \draw (6,4) -- (6,-2) -- (0,-2);
  \node at (3,6.5) {$\I_{p,q}$};
  \node at (3,1) {$I_q$};
  \draw[<->] (-1,2) -- (-1,9); 
  \node[left] at (-1,5.5) {$a$};
  \draw[<->] (0,10) -- (7,10); 
  \node[above] at (3.5,10) {$a$};
\end{tikzpicture}
\end{center}
\caption{In the proof of Proposition~\ref{prop:detA=1}, $\I_{p,q}$ is the topleft-most $p \times q$ submatrix of $A$ and $A$ is the topleft-most $a \times a$ submatrix of $\I_{n.r}$.}
\label{fig:detA=1}
\end{figure}
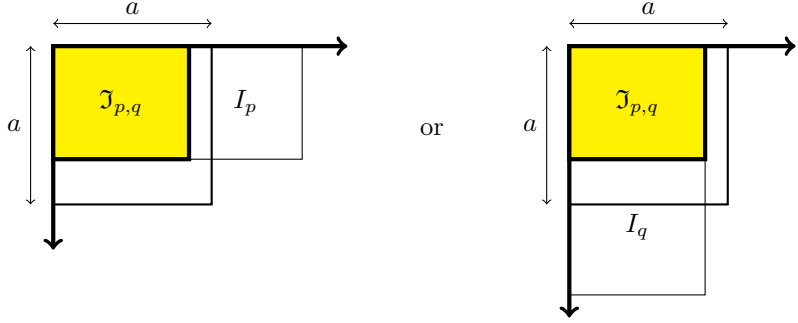

\begin{defn}
\label{defn:M}
For any positive integers $n$ and $r$ with $n>r$ we define the $(n+r) \times r$ matrix $\M_{n,r}$ formed by concatenating a copy of $I_r$ ontop of $\I_{n,r}$. \emph{i.e.} 
	$$ \M_{n,r} = \left[ \begin{array}{c} I_r \\ \I_{n,r}  \end{array} \right]. $$
\end{defn}

\begin{prop}
\label{prop:independent_rows}
For any positive integers $n>r$, if $M$ is an $r \times r$ submatrix of $\M_{n,r}$ formed by $r$ consecutive rows then $|\det(M)| = 1$.
\end{prop}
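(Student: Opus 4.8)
The plan is to reduce the statement to Proposition~\ref{prop:detA=1}. Recall that $\M_{n,r}$ consists of a copy of $I_r$ stacked on top of $\I_{n,r}$, so the full matrix has $n+r$ rows, which we index $1,2,\ldots,n+r$; rows $1,\ldots,r$ are $I_r$ and rows $r+1,\ldots,n+r$ form $\I_{n,r}$. Fix a block of $r$ consecutive rows, say rows $k+1, k+2, \ldots, k+r$ for some $0 \le k \le n$, and call the resulting $r\times r$ matrix $M$. If $k=0$ then $M=I_r$ and $|\det M|=1$ trivially, so assume $k \ge 1$.

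First I would observe a self-similarity / shift property of $\I_{n,r}$: from Definition~\ref{defn:GIM}, peeling off the bottom $I_b$-block (or right $I_a$-block) at each recursive step shows that any $r$ consecutive rows of $\I_{n,r}$ reappear, up to a column permutation, as the top $r$ rows of a smaller generalised identity matrix $\I_{n',r}$ for an appropriate $n' \le n$ — intuitively, the rows of $\I_{n,r}$ read from the bottom up are a periodic-like pattern governed by the Euclidean algorithm on $(n,r)$, and the bottom-most $r$ rows are always exactly $I_r$. Concretely, I would prove by induction on $n$ that for every $0 \le k \le n$, the submatrix of $\I_{n,r}$ formed by rows $k+1,\ldots,\min(k+r,n)$ (padded, when $k+r>n$, by rows of the trailing $I_r$) equals the top $\max(r,\, \text{something})$ rows of a copy of $\I_{n-k,r}$ after permuting columns; the induction step is a direct case split on $n<r$ versus $n>r$ matching the two cases of Definition~\ref{defn:GIM}, using that column permutations preserve $|\det|$ and that the bottom rows are the identity.

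With that reduction in hand, the block $M$ of $r$ consecutive rows of $\M_{n,r}$ is, up to a column permutation, the top $r$ rows of some $\I_{m,r}$ with $m \ge r$ (taking $m=n-k$ when the block lies entirely within the $\I_{n,r}$ part, and handling the case where the block straddles the $I_r$/$\I_{n,r}$ boundary separately — there the block is exactly $I_r$ by construction since the last $r$ rows of the $I_r$ part together with the first few rows of $\I_{n,r}$... actually the cleanest split is: the block is either contained in the top $I_r$, contained in $\I_{n,r}$, or straddles; the straddling and contained-in-$I_r$ cases give $M=I_r$ directly). The top $r$ rows of $\I_{m,r}$ are precisely the topleft-most $r\times r$ submatrix of $\I_{m,r}$ (since $\I_{m,r}$ has exactly $r$ columns), so Proposition~\ref{prop:detA=1} with $a=r$, $n \leftarrow m$ gives $|\det|=1$; a column permutation does not change this. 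Hence $|\det M| = 1$ in all cases.

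The main obstacle is the bookkeeping in the shift lemma: making precise the claim that any window of $r$ consecutive rows of $\I_{n,r}$ is a column-permuted copy of the top rows of a smaller $\I_{m,r}$, and correctly tracking what happens when the window crosses the boundary between the recursively-built block and the appended identity block. Once that structural fact is nailed down, the determinant computation is immediate from Proposition~\ref{prop:detA=1}. An alternative that avoids the shift lemma entirely: argue directly that $M$ has a generalised-staircase shape — each row, read as a $0/1$ vector, has its support determined by the Euclidean pattern — and show the only nonzero term in the Leibniz expansion comes from identity sub-blocks, exactly as in the proof of Proposition~\ref{prop:detA=1}; but this essentially re-proves that proposition, so invoking it after a shift argument is cleaner.
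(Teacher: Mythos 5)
There is a genuine gap: the ``shift lemma'' on which your reduction rests is false. It is not true that a window of $r$ consecutive rows of $\I_{n,r}$ is, up to a column permutation, the top $r$ rows of $\I_{n-k,r}$ (or of any $\I_{m,r}$). Take $n=8$, $r=3$: Definition~\ref{defn:GIM} gives
$$ \I_{8,3} = \left[ \begin{array}{ccc} 1&1&0\\ 1&0&1\\ 1&0&0\\ 0&1&0\\ 0&0&1\\ 1&0&0\\ 0&1&0\\ 0&0&1 \end{array} \right], $$
and rows $2$--$4$ of $\I_{8,3}$ (i.e.\ rows $5$--$7$ of $\M_{8,3}$, a legitimate window with $a=4\ge r$) are $(1,0,1),(1,0,0),(0,1,0)$, with row-weight multiset $\{2,1,1\}$. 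For every $m\ge 3$ the top three rows of $\I_{m,3}$ are one of $I_3$, or $(1,1,1),(1,0,0),(0,1,0)$, or $(1,1,0),(1,0,1),(1,0,0)$, according to $m$ modulo $3$; their weight multisets are $\{1,1,1\}$, $\{3,1,1\}$, $\{2,2,1\}$, and column permutations preserve row weights, so no $m$ works. The recursion of Definition~\ref{defn:GIM} simply does not telescope under row shifts. A second error: you assert that the straddling case (a window meeting both the top $I_r$ and $\I_{n,r}$) gives $M=I_r$ directly, but rows $2$--$4$ of $\M_{4,3}$ are $(0,1,0),(0,0,1),(1,1,1)\ne I_3$. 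That is precisely the case where Proposition~\ref{prop:detA=1} with $a<r$ is indispensable.

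For contrast, the paper splits on whether the window starts inside the top $I_r$ or inside $\I_{n,r}$. If $a<r$, then $M$ is block lower-anti-triangular with blocks $I_{r-a}$ and the topleft $a\times a$ submatrix $A$ of $\I_{n,r}$, and Proposition~\ref{prop:detA=1} is applied to $A$. If $a\ge r$, Proposition~\ref{prop:detA=1} is not used at all: one instead uses the fact that the bottom of $\I_{n,r}$ is a stack of copies of $I_r$, so $M$ has the form $\left[\begin{array}{cc} * & I_b \\ I_{r-b} & 0 \end{array}\right]$ with $b = (n-a) \bmod r$, whence $|\det M|=1$. Your proposal inverts this division of labour, invoking Proposition~\ref{prop:detA=1} exactly where it cannot be made to apply and omitting it where it is needed. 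The ``alternative'' you sketch at the end (a direct staircase/Leibniz argument on the window) is in fact much closer to what a correct proof of the $a\ge r$ case requires.
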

\begin{proof}
Let $M$ be the rows $\omega_{a+1}, \omega_{a+2}, \ldots ,\omega_{a+r}$. There are two cases: either $0 \leq a < r$ or $r \leq a \leq n$.
\begin{itemize}
\item If $a<r$ then $M$ consists of the final $r-a$ rows of $I_r$ followed by the initial $a$ rows of $\I_{n,r}$. In this case, $M$ must have the form:
	$$ M = \left[ \begin{array}{cc} 0 & I_{r-a} \\ A & * \end{array} \right] $$
Where $A$ is the top-leftmost $a \times a$ submatrix of $\I_{n,r}$ and $*$ denotes a submatrix with arbitrary entries. In this case, by Proposition~\ref{prop:detA=1}, we have 
	$$ |\det(M)| = |\det(I_{r-a})| \times |\det(A)| = 1 $$
\item If $a \geq r$ then $M$ must have the form: 
  $$ M = \left[ \begin{array}{cc} * & I_b \\ I_{r-b} & 0 \end{array} \right] $$
where $*$ denotes a submatrix with arbitrary entries and $b$ is the remainder when $n-a$ is divided by $r$. In this case we have 
	$$ |\det(M)| = |\det(I_{r-b})| \times |\det(I_b)| = 1. $$
\end{itemize}
\end{proof}

\begin{thm}
\label{thm:full_clock}
For any $n \geq r > 0$ and any $s \geq 2$, the full clock network, $N_n(r)$, is linearly $s$-solvable. 
\end{thm}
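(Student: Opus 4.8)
The plan is to exhibit an explicit linear circuit by declaring its $[r]$-circuit matrix to be the matrix $\M_{n,r}$ of Definition~\ref{defn:M}, and then to check that $\M_{n,r}$ satisfies the two hypotheses of Proposition~\ref{prop:M_F} with $R=[r]$. The first hypothesis is immediate, since by construction the top $r$ rows of $\M_{n,r}$ are a copy of $I_r$. For the second hypothesis, fix an index $k$ with $r<k\leq n+r$. Because $[r]=\{1,2,\ldots,r\}$, the rows $\{\omega(i)\mid k-i\in[r]\}$ are precisely the $r$ consecutive rows $\omega(k-r),\omega(k-r+1),\ldots,\omega(k-1)$; these indices all lie in $\{1,\ldots,n+r-1\}$ since $k\geq r+1$, so they genuinely index rows of $\M_{n,r}$.

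The key step is to apply Proposition~\ref{prop:independent_rows} to the $r\times r$ matrix $M$ formed by these $r$ consecutive rows: it has $|\det(M)|=1$. Since $\pm 1$ is a unit in $\Z_s$ for every $s\geq 2$, the matrix $M$ is invertible over $\Z_s$, so its rows span the free module $\Z_s^r$. Hence $\omega(k)$, being an element of $\Z_s^r$, is a $\Z_s$-linear combination of $\omega(k-r),\ldots,\omega(k-1)$, which is exactly the second hypothesis of Proposition~\ref{prop:M_F}. That proposition then produces a linear circuit $F$ on $N_n(r)$ with $M_F=\M_{n,r}$.

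It remains to verify the output condition in Definition~\ref{defn:ifp}: the final $r$ rows of $M_F=\M_{n,r}$ must be a copy of $I_r$. These final $r$ rows are the bottommost $r$ rows of the sub-block $\I_{n,r}$, and unwinding Definition~\ref{defn:GIM} shows they form $I_r$: when $n>r$ the matrix $\I_{n,r}$ is $\I_{n-r,r}$ stacked on top of $I_r$, and when $n=r$ we have $\I_{r,r}=I_r$ outright. Therefore $F$ linearly $s$-solves $N_n(r)$, completing the proof.

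I expect essentially no obstacle to remain once Propositions~\ref{prop:detA=1} and~\ref{prop:independent_rows} are in hand: the substantive work was the combinatorial analysis of $\I_{n,r}$ establishing that every block of $r$ consecutive rows has determinant exactly $\pm 1$ (not merely nonzero over $\mathbb{Q}$), which is precisely what makes the single matrix $\M_{n,r}$ work simultaneously for all moduli $s$. The only points requiring care in the write-up are the index bookkeeping between rows of $\M_{n,r}$ and rows of its sub-block $\I_{n,r}$, and the degenerate case $n=r$.
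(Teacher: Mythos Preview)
Your proposal is correct and follows essentially the same route as the paper: declare $\M_{n,r}$ to be the circuit matrix, invoke Proposition~\ref{prop:independent_rows} to see that any $r$ consecutive rows span $\Z_s^r$ (hence each row is a $\Z_s$-linear combination of the preceding $r$), apply Proposition~\ref{prop:M_F}, and finish by noting that the bottom $r$ rows of $\M_{n,r}$ form $I_r$. You are in fact a bit more explicit than the paper in justifying why $|\det M|=1$ gives invertibility over every $\Z_s$, in unwinding Definition~\ref{defn:GIM} to see the final block is $I_r$, and in flagging the boundary case $n=r$.
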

\begin{proof}
Consider the matrix $\M_{n,r}$ as defined in Definition~\ref{defn:M}. By Proposition~\ref{prop:independent_rows}, for any $s$, the integer span of any $r$ consecutive rows of $\M_{n,r}$ is all $\Z_s^r$. So any row can be expressed as a linear combination of the preceding $r$ rows. Moreover, the first $r$ rows of $\M_{n,r}$ form a copy of $I_r$. Therefore $\M_{n,r}$ satisfies the conditions of Proposition~\ref{prop:M_F}, and so there is a circuit $F$ on $N_n(r)$ such that $M_F=\M_{n,r}$. This circuit linearly solves $N_n(r)$ because the final $r$ rows of $\M_{n,r}$ form a copy of $I_r$. 
\end{proof}

Example~\ref{ex:Riis_incomplete} is the construction used in the incomplete proof of Proposition A in \cite{Riis2007information}. For certain integers $n$ and $r$, this construction does not solve $N_n(r)$. 

\begin{ex}
\label{ex:Riis_incomplete}
Let $F$ be a circuit on $N_n(r)$ over $\Z_s$ such that the valuation of $F$ satisfies
	$$ X_i+X_{i+1}+\cdots +X_{i+r} \equiv 0 \; (\mbox{mod } s) \qquad \mbox{ for } i=1,2,3, \ldots ,n-r $$
for any input $c \in \Z_s^r$. To see that this circuit does not solve $N_n(r)$ in general, observe that for $n=7$ and $r=2$, it does not solve $N_7(2)$. Explicitly, for any $c = (c_1,c_2) \in \Z_s^2$, the valuation must satisfy: 
	\begin{align*}
	X_1 & = c_1 \\
	\mbox{and} \quad X_2 & = c_2, 
\intertext{because for any valuation of a circuit on $N_7(2)$, we have $(X_1,X_2) = c$. Moreover, for $j=3,4,5,6$ and $7$, we can deduce:}
	X_3 & = -c_1-c_2, \\
	X_4 & = c_1, \\
	X_5 & = c_2, \\
	X_6 & = -c_1-c_2 \\
	\mbox{and} \quad X_7 & = c_1, 
\intertext{because $X_{j-2}+X_{j-1}+X_j \equiv 0$. Finally, if $F$ solved $N_7(2)$, then we would have:}
	X_8 & = c_1 \\
	\mbox{and} \quad X_9 & = c_2. 
	\end{align*}
However, this is not possible; $X_9=c_2$ cannot be determined from only $X_7=c_1$ and $X_8=c_1$. 
\end{ex}

\section{Analysis of a specific case}
\label{sec:analysis_of_R=1,3}

\begin{figure}
$$ \begin{array}{r|ccccccccc} n & 4 & 5 & 6 & 7 & 8 & 9 & 10 & 11 & \geq 12 \\ \hline 
\mbox{Is $N_n(\{1,3\})$ linearly $2$-solvable?} 
 & \mbox{\color{red} \ding{53}}  & \mbox{\color{red} \ding{53}} & \mbox{\color{green} \ding{51}} 
 & \mbox{\color{green} \ding{51}} & \mbox{\color{red} \ding{53}} & \mbox{\color{green} \ding{51}} 
 & \mbox{\color{green} \ding{51}} & \mbox{\color{red} \ding{53}} & \mbox{\color{green} \ding{51}} \\
\mbox{Is $N_n(\{1,3\})$ linearly $3$-solvable?} 
 & \mbox{\color{red} \ding{53}}  & \mbox{\color{red} \ding{53}} & \mbox{\color{green} \ding{51}} 
 & \mbox{\color{red} \ding{53}} & \mbox{\color{green} \ding{51}} & \mbox{\color{green} \ding{51}} 
 & \mbox{\color{green} \ding{51}} & \mbox{\color{green} \ding{51}} & \mbox{\color{green} \ding{51}} \\\end{array} $$
\caption{The linear $2$-solvability and linear $3$-solvability of $N_n(\{1,3\})$ for all $n \geq 4$.}
\end{figure}

In this section we investigate the $2$-solvability and $3$-solvability of the network $N_n(\{1,3\})$ for various values $n$. Firstly, we consider $n=7$ and $n=8$ in the following example.

\begin{ex}
Let $n=7$, $m=3$ and $R = \{ 1,3 \}$ and consider the following two matrices $A$ and $B$ defined as follows. 
  $$ A = \left[ \begin{array}{ccc}
  1 & 0 & 0 \\ 
  0 & 1 & 0 \\ 
  0 & 0 & 1 \\
  1 & 0 & 1 \\
  1 & 1 & 1 \\
  1 & 1 & 0 \\
  0 & 1 & 1 \\
  1 & 0 & 0 \\
  0 & 1 & 0 \\
  0 & 0 & 1 \end{array} \right] \qquad \mbox{and} \qquad  
  B = \left[ \begin{array}{ccc}
  1 & 0 & 0 \\ 
  0 & 1 & 0 \\ 
  0 & 0 & 1 \\
  1 & 0 & 1 \\
  1 & 1 & 1 \\
  1 & 1 & 2 \\
  2 & 1 & 0 \\
  0 & 2 & 1 \\
  1 & 0 & 0 \\
  0 & 1 & 0 \\
  0 & 0 & 1 \end{array} \right] $$ 
The matrix $A$ is constructed so that the $i^{\mbox{\scriptsize th}}$ row is the sum of the $(i-1)^{\mbox{\scriptsize th}}$ row and the $(i-3)^{\mbox{\scriptsize th}}$ row modulo $2$. So, by Proposition~\ref{prop:M_F}, $A$ is a valid $\{1,3\}$-circuit matrix (over $\Z_2$) and since the bottom $3$ rows of $A$ form an identity matrix, this demonstrates that $N_7(\{1,3\})$ is linearly $2$-solvable. Similarly the matrix $B$ demonstrates that $N_8(\{1,3\})$ is linearly $3$-solvable. It can be verified by a brute force computer search that $N_7(\{1,3\})$ is not linearly $3$-solvable and $N_8(\{1,3\})$ is not linearly $2$-solvable.
\end{ex}

In general, the $s$-solvability of a network depends on $s$. However, for any $n \geq 12$, we can construct a $\{1,3\}$-circuit matrix of length $n$ which is valid over $Z_s$ for any $s \geq 2$ in the following way. For $n \equiv 0,1,2 \: (\mbox{mod } 3)$ iteratively concatenate copies of $I_3$ to the bottom of $I_3$, $M_{10}$ or $M_{14}$ respectively, where $M_{10}$ and $M_{14}$ are given in Figure~\ref{fig:M10andM14}. We know that no such $\{1,3\}$-circuit matrix exists for $n=7$ nor $n=11$ because (by brute force computer search) we computed that $N_7(\{1,3\})$ is not linearly $3$-solvable and $N_{11}(\{1,3\})$ is not $2$-solvable.

\begin{figure}
$$ M_{10} = \left[ \begin{array}{ccc}
  1 & 0 & 0 \\ 
  0 & 1 & 0 \\ 
  0 & 0 & 1 \\
  1 & 0 & 1 \\ 
  1 & 1 & 1 \\ 
  0 & 0 & 1 \\ 
  -1 & 0 & 1 \\ 
  1 & 1 & 1 \\ 
  1 & 1 & 0 \\ 
  0 & 1 & 1 \\ 
  1 & 0 & 0 \\
  0 & 1 & 0 \\
  0 & 0 & 1 \end{array} \right] \qquad \mbox{ and } \qquad
  M_{14} = \left[ \begin{array}{ccc}
  1 & 0 & 0 \\ 
  0 & 1 & 0 \\ 
  0 & 0 & 1 \\
  1 & 0 & 0 \\ 
  1 & 1 & 0 \\ 
  1 & 1 & 1 \\ 
  0 & 1 & 1 \\ 
  1 & 0 & -1 \\ 
  0 & 1 & 2 \\ 
  0 & 0 & 1 \\ 
  1 & 0 & -1 \\ 
  1 & 1 & 1 \\ 
  1 & 1 & 0 \\ 
  0 & 1 & 1 \\ 
  1 & 0 & 0 \\
  0 & 1 & 0 \\
  0 & 0 & 1 \end{array} \right] $$ 
\caption{Matrices $M_{10}$ and $M_{14}$ show that $N_{10}(\{1,3\})$ and $N_{14}(\{1,3\})$ are linearly $s$-solvable for any $s \geq 2$.}
\label{fig:M10andM14}
\end{figure}

\section{General Clock Networks}

We saw in the previous section that the network $N_n(\{1,3\})$ is linearly $s$-solvable for any $s$, for all $n \geq 12$. In this section we generalise this result to arbitrary finite sets of positive integers. Specifically, we determine for which finite $R \subset \Z^+$, does there exist a constant $n_0$ such that $N_n(R)$ is $s$-solvable for all $s$ and all $n \geq n_0$. We deduce (by Lemma~\ref{lem:gcd>1} and Corollary~\ref{cor:large_n},) that such an integer $n_0$ exists if and only if $\gcd(R)=1$. 

\begin{lem}
\label{lem:gcd>1}
If $n$ is not a multiple of $\gcd(R)$, then $N_n(R)$ is not $s$-solvable for any $s \geq 2$. 
\end{lem}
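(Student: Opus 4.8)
The plan is to exploit the fact that, modulo $d := \gcd(R)$, every edge of $N_n(R)$ links two vertices whose indices are congruent. Concretely, if $v_iv_k$ is an edge then $k-i \in R$, and since $d$ divides every element of $R$ we get $i \equiv k \pmod d$. I would then propagate this through an arbitrary circuit: I claim that for \emph{any} circuit $F$ on $N_n(R)$ over $\Z_s$ and any index $k$ with $1 \le k \le n+r$, the entry $X_k$ of the valuation depends only on those input coordinates $c_j$ with $j \in [r]$ and $j \equiv k \pmod d$. Precisely, if $c, c' \in \Z_s^r$ agree in every coordinate $j$ with $j \equiv k \pmod d$, then the valuations of $F$ on $c$ and on $c'$ agree in position $k$.

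This claim I would prove by strong induction on $k$. The base case is $k \in [r]$, where $X_k = c_k$ and $k \equiv k \pmod d$, so the statement is immediate. For $k > r$ we have $X_k = f_k\big(X_i \mid i \in \Gamma(k)\big)$, and every $i \in \Gamma(k)$ satisfies $i < k$ (since $k - i \in R \subset \Z^+$, so $\Gamma(k) \subseteq \{1,\dots,k-1\}$) together with $i \equiv k \pmod d$; by the inductive hypothesis each such $X_i$ depends only on the coordinates $c_j$ with $j \equiv i \equiv k \pmod d$, hence so does $X_k$. If $\Gamma(k) = \emptyset$ then $X_k$ is constant, which is a degenerate special case of the same conclusion.

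Finally I would derive the contradiction. Suppose $N_n(R)$ were $s$-solvable for some $s \ge 2$, witnessed by a circuit $F$. Applying the claim to the output node $v_{n+1}$ shows $X_{n+1}$ depends only on the coordinates $c_j$ with $j \equiv n+1 \pmod d$. But $n$ is not a multiple of $d$, so $1 \not\equiv n+1 \pmod d$, and therefore $X_{n+1}$ does not depend on $c_1$: fixing all other coordinates and letting $c_1$ range over $\Z_s$ (possible since $s \ge 2$) leaves $X_{n+1}$ unchanged, contradicting the requirement $X_{n+1} = c_1$ that $s$-solvability imposes. No genuine obstacle is anticipated; the only points requiring care are stating the inductive hypothesis in the right form (dependence on a prescribed set of input coordinates, phrased via agreement of valuations) and noting that $\Gamma(k) \subseteq \{1,\dots,k-1\}$ so that the induction is well-founded.
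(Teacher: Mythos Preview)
Your proof is correct and follows essentially the same idea as the paper's: edges of $N_n(R)$ preserve the residue class of the index modulo $d=\gcd(R)$, so the value at $v_{n+a}$ cannot depend on the input at $v_a$ when $d\nmid n$. The only cosmetic difference is that the paper phrases this as a connected-components argument (the network splits into pieces indexed by residues mod $d$), whereas you make the same fact explicit via a strong induction on $k$ showing $X_k$ depends only on inputs in its residue class; your version is arguably a bit more self-contained, but the content is the same.
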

\begin{proof}
Let $d=\gcd(R)>1$. By definition, each edge $v_iv_k$ only joins vertices such that $i \equiv k \; (\mbox{mod } d)$. Therefore $N_n(R)$ is disconnected with at least one component for each residue modulo $d$. Now consider some index $a$, and an input $c = (c_1,c_2, \ldots ,c_r)$. If we keep $c_i$ constant for all $i \not= a$ and let $c_a$ vary, then the valuation will only change on vertices in the same component as $v_a$. Since $n$ is not a multiple of $d$, 
	$$ (n+a) - a = n \not\equiv 0 \: (\mbox{mod } d). $$ 
So the input node, $v_a$, and its corresponding output node, $v_{n+a}$, are in a different components of $N_n(R)$. Therefore $N_n(R)$ is not $s$-solvable for any $s \geq 2$.
\end{proof}

Now consider any $R \subset \Z^+$ such that $\gcd(R)>1$, there are an infinite number of integers $n$ which are not a multiple of $\gcd(R)$. By Lemma~\ref{lem:gcd>1}, this is an infinite number of integers $n$ such that $N_n(R)$ is not $s$-solvable (for any $s$). Therefore there cannot exist any $n_0$ such that $N_n(R)$ is $s$-solvable for all $n \geq n_0$. However, if $n$ is a multiple of $d = \gcd(R) > 1$, then the network $N_n(R)$ is a disjoint union of $d$ copies of 
	$$ N^\prime = N_{n/d}(R^\prime) \quad \mbox{where} \quad R^\prime = \{ j/d \: | \: j \in R \}. $$
So $N_n(R)$ is $s$-solvable if and only if $N^\prime$ is $s$-solvable. Since $\gcd(R^\prime)=1$, it suffices now to consider only the cases that $\gcd(R)=1$. We now make the following definition and propositions, used in Theorem~\ref{thm:large_n} and Corollary~\ref{cor:large_n}.

\begin{defn}
Let $s \geq 2$ be an integer, let $R$ be a finite set of positive integers, and let $r = \max(R)$.
\begin{itemize}
\item An \emph{$R$-atomic} matrix is any $r \times r$ matrix, with entries in $\Z_s$, of the form:
	$$ \left[ \begin{array}{ccccc} 
	0 & 1 & 0 & \cdots & 0 \\
	0 & 0 & 1 & \cdots & 0 \\
	\vdots & \vdots & \vdots & \ddots & \vdots \\
	0 & 0 & 0 & \cdots & 1 \\
	\alpha_r & \alpha_{r-1} & \alpha_{r-2} & \cdots & \alpha_1 
	\end{array} \right] $$
such that $\alpha_j = 0$ for all $j \not\in R$. 
\item A \emph{$R$-step} matrix is any $r \times r$ matrix, with entries in $\Z_s$, formed by starting with $I_r$, and then for some $1 \leq t \leq r$, replacing the $t^{\mbox{\scriptsize th}}$ row with $[\beta_1,\beta_2, \ldots ,\beta_r]$, 
	$$ \left[ \begin{array}{cccccc} 
	1 & 0 & \cdots & 0 & \cdots & 0 \\
	0 & 1 & \cdots & 0 & \cdots & 0 \\
	\vdots & \vdots & \ddots & \vdots & & \vdots \\
	\beta_1 & \beta_2 & \cdots & \beta_t & \cdots & \beta_r \\
	\vdots & \vdots & & \vdots & \ddots & \vdots \\
	0 & 0 & \cdots & 0 & \cdots & 1 \\
	\end{array} \right] $$
where $\beta_i$ is non-zero only if there is some $j \in R$ such that $i+j \equiv t \: (\mbox{mod } r)$. Since $r \in R$, we always allow $\beta_t$ to be non-zero.
\item For any $1 \leq t \leq r$, the  \emph{$t$-toggle} matrix is the following $r \times r$ matrix, $T(t)$, with entries in $\Z_s$.
	$$ T(t) = \left[ \begin{array}{cccccc} 
	1 & 0 & \cdots & 0 & \cdots & 0 \\
	0 & 1 & \cdots & 0 & \cdots & 0 \\
	\vdots & \vdots & \ddots & \vdots & & \vdots \\
	-1 & -1 & \cdots & -1 & \cdots & -1 \\
	\vdots & \vdots & & \vdots & \ddots & \vdots \\
	0 & 0 & \cdots & 0 & \cdots & 1 \\
	\end{array} \right]. $$
\emph{i.e.} The $t$-toggle matrix is formed from $I_r$ by replacing row $t$ with a row of $-1$s. A matrix is called a \emph{toggle} matrix iff it is a $t$-toggle matrix for some $t$.
\end{itemize}
\end{defn}

\begin{prop}
\label{prop:atomic2step}
Any $R$-step matrix can be expressed as a product of $r$ $R$-atomic matrices. 
\end{prop}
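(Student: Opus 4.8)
The plan is to exhibit an explicit factorization of a given $R$-step matrix into $r$ $R$-atomic matrices, by exploiting the ``companion-matrix'' shape of atomic matrices. Recall that an $R$-atomic matrix is the transpose-like companion form: it sends the standard basis vector $e_i \mapsto e_{i-1}$ for $i>1$, and $e_1 \mapsto (\alpha_r,\alpha_{r-1},\ldots,\alpha_1)^T = \sum_j \alpha_j e_{r+1-j}$, with $\alpha_j=0$ unless $j\in R$. In other words, left multiplication by an atomic matrix performs a cyclic upward shift of the rows combined with adding a prescribed $R$-supported combination into the last row. Iterating $r$ atomic matrices $A^{(1)},A^{(2)},\ldots,A^{(r)}$ (with coefficient vectors $\alpha^{(1)},\ldots,\alpha^{(r)}$) and tracking where each row lands, the product $A^{(r)}\cdots A^{(1)}$ is a matrix each of whose rows is either a shifted standard basis vector that has ``wrapped around'' exactly once, or is built from the accumulated $\alpha$-combinations; the key point is that after exactly $r$ shifts every row has returned to its original index, so the product differs from $I_r$ only in the entries coming from the injected combinations.

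The core computation is to pick the coefficient vectors so that the product equals the target $R$-step matrix $S$, which is $I_r$ with row $t$ replaced by $[\beta_1,\ldots,\beta_r]$ where $\beta_i\ne 0$ only when $i+j\equiv t \pmod r$ for some $j\in R$. I would first handle the case $t=r$: then the $R$-step matrix is $I_r$ with last row $[\beta_1,\ldots,\beta_r]$, and the constraint ``$i+j\equiv r$'' with $j\in R$ and $r\in R$ matches exactly the support condition $\alpha_j=0$ for $j\notin R$ in a single atomic matrix — so one atomic matrix already does it, and we pad with $r-1$ copies of the atomic matrix with coefficient vector giving a pure cyclic shift (i.e. $\alpha_r=1$, all other $\alpha_j=0$, which is a permutation matrix of order $r$), since $r$ such shift matrices multiply to $I_r$. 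Then for general $t$, note that conjugating (or rather composing) by these cyclic-shift atomic matrices lets us ``rotate'' the special row from position $r$ to position $t$: precisely, write $S = P^{-(r-t)} S' P^{(r-t)}$-style relation, or more directly insert $t-1$ pure shifts, then one carefully chosen atomic matrix, then $r-t$ pure shifts; tracking indices modulo $r$ shows the injected $\alpha$-combination lands in row $t$ and that the support condition on the $\beta_i$ translates exactly into $\alpha_j=0$ for $j\notin R$. Checking the mod-$r$ bookkeeping — that $i+j\equiv t$ corresponds correctly to the position in which $\alpha_j$ appears after the shifts — is the computation I would do carefully but not belabor here.

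The main obstacle I anticipate is precisely this index arithmetic modulo $r$: one must verify that the cyclic shifts do not cause the injected combination to ``smear'' across several rows (it should not, since only the last row is modified by each atomic factor and subsequent factors are pure shifts that merely relabel), and that the allowed support $\{j\in R\}$ of an atomic matrix's coefficient vector matches, after the appropriate number of shifts, exactly the allowed support $\{i : i+j\equiv t \ (\mathrm{mod}\ r)\text{ for some }j\in R\}$ of the $R$-step row. A clean way to avoid sign/direction confusion is to first record the action of a single atomic matrix on row vectors (left multiplication) as ``delete the top row, shift everything up, append $\sum_j \alpha_j (\text{old row } r+1-j)$ at the bottom,'' verify the order-$r$ claim for the pure shift, and then compose. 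I would also double-check the edge constraint ``since $r\in R$ we always allow $\beta_t\ne 0$'' lines up with ``$\alpha_r$ may be nonzero'' in the atomic matrix, which it does. Once the shift-and-inject picture is set up, the factorization is essentially forced and the proof is a short verification.
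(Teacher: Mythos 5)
Your construction is exactly the paper's: the paper also factors an arbitrary $R$-step matrix as $A_r\cdots A_t\cdots A_1$ where every factor except the one in position $t$ is the pure cyclic-shift atomic matrix $P$ (the permutation with $\alpha_r=1$), i.e.\ $t-1$ shifts, one arbitrary atomic matrix, then $r-t$ shifts. The shift-and-inject bookkeeping you describe is the same verification the paper leaves implicit, so the proposal is correct and essentially identical in approach.
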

\begin{proof}
Let $P$ denote the only $R$-atomic matrix which is also a permutation matrix; the $R$-atomic matrix for which $\alpha_r=1$ and $\alpha_i = 0$ for all $i<r$. For any $1 \leq t \leq r$ consider the product 
	$$ A_rA_{r-1} \ldots A_t \ldots A_2A_1 = S, $$ 
where $A_t$ is an arbitrary $R$-atomic matrix and $A_i=P$ for all $i \not= t$. This product, $S$, is an arbitrary $R$-step matrix. 
\end{proof}

\begin{prop}
\label{prop:step2toggle}
If $\gcd(R)=1$ and $r = \max(R) > 1$ then for any $1 \leq t \leq r$ the $t$-toggle matrix can be expressed as a product of $(2r-3)$ $R$-step matrices.
\end{prop}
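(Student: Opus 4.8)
The plan is to realise $T(t)$ as a composition of $2r-3$ elementary row operations, each of which --- read as left multiplication on $I_r$ --- is an $R$-step matrix. Left-multiplying a matrix by the $R$-step matrix whose modified row is row $a$ replaces row $a$ by a linear combination of the rows $\{\,\text{row }b : (a-b) \bmod r \in R'\,\}$, where $R' := \{\,j \bmod r : j \in R\,\}$; in particular row $a$ itself is always available, since $0 \in R'$. Since $\gcd(R) = 1$, the subgroup of $\Z_r$ generated by $R'$ is all of $\Z_r$, and because $R' = R'' \cup \{0\}$ with $R'' := R \setminus \{r\}$ (every element of $R''$ lies in $\{1,\dots,r-1\}$), the set $R''$ already generates $\Z_r$; note $R'' \neq \emptyset$, as otherwise $R = \{r\}$ and $\gcd(R) = r > 1$.

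First I would reduce to a routing statement in $\Z_r$: the directed Cayley digraph $\mathrm{Cay}(\Z_r, R'')$ has a directed Hamiltonian path, and --- being vertex-transitive, so we may translate it --- one ending at $t$. Write it $q_1, q_2, \dots, q_{r-1}, q_r = t$, a permutation of $\Z_r \cong \{1,\dots,r\}$ with $(q_{k+1}-q_k)\bmod r \in R''$ for all $k$. Such a path exists by induction on $r$: choose $a \in R''$ and set $d = \gcd(a,r)$; if $d = 1$ then $0,a,2a,\dots,(r-1)a$ is Hamiltonian, and if $d > 1$ then $\langle a\rangle$ partitions $\Z_r$ into $d$ cosets, the image of $R''$ generates $\Z_r/\langle a\rangle \cong \Z_d$, so by the inductive hypothesis there is a Hamiltonian path on the cosets; traversing each coset along its $a$-cycle and hopping from one coset to the next along the coset-path (each hop along an element of $R''$) visits all $r$ vertices in $r-1$ steps.

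Given such a path, the construction is a sweep followed by an unsweep. \emph{Forward sweep:} for $k = 1, 2, \dots, r-2$ perform ``add row $q_k$ to row $q_{k+1}$'', and then perform ``replace row $t$ by $-(\text{row }t) - (\text{row }q_{r-1})$''. A short induction shows that afterwards row $q_k = e_{q_1} + \dots + e_{q_k}$ for $1 \le k \le r-1$, and row $t = -(e_1 + \dots + e_r)$. \emph{Backward sweep:} for $k = r-1, r-2, \dots, 2$ perform ``subtract row $q_{k-1}$ from row $q_k$''; these operations never touch row $t$ and use rows $q_{k-1}$ that are still in their post-forward-sweep state, so they restore row $q_k = e_{q_k}$ one at a time. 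The resulting matrix is exactly $T(t)$, using $(r-2) + 1 + (r-2) = 2r-3$ operations. Each is an $R$-step matrix: the operation modifying row $q_{k+1}$ (resp.\ $q_k$) uses only rows $q_{k+1}$ and $q_k$ (resp.\ $q_k$ and $q_{k-1}$), and $(q_{k+1}-q_k)\bmod r \in R'' \subseteq R'$ by choice of path; the last forward operation modifies row $t$ using rows $t$ and $q_{r-1}$, with $(t - q_{r-1})\bmod r \in R'$. All coefficients used are $0, 1, -1$, so the construction works uniformly over every $\Z_s$. The smallest case $r = 2$ (which forces $R = \{1,2\}$) degenerates to the single operation ``row $t \leftarrow -(\text{row }t) - (\text{other row})$'', matching $2r - 3 = 1$.

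The step I expect to be the main obstacle is producing the Hamiltonian path in $\mathrm{Cay}(\Z_r, R'')$ with the prescribed endpoint; once that ordering of $\{1,\dots,r\}$ is fixed, the sweep-and-unsweep bookkeeping is routine. One must also check that the operation count lands on exactly $2r-3$: there is no slack --- apart from the single step that forms row $t$, every edge of the Hamiltonian path is traversed once to build the accumulators and once to dismantle them.
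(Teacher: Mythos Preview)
Your argument is correct. The paper takes a different route: rather than a Hamiltonian path, it builds a chain $\{x,t\}=U_2\subset U_3\subset\cdots\subset U_r=[r]$ by adjoining one element $b$ at a time, chosen so that $a-b\pmod r\in R$ for some $a\in U_{k-1}$; this only needs connectivity of the Cayley digraph on $\Z_r$ with connection set $R\bmod r$, which is immediate from $\gcd(R)=1$. Writing $S_k$ for the matrix with row $t$ equal to minus the indicator vector of $U_k$ and all other rows standard, the paper uses the conjugation identity $S_k=E_{ab}(-1)\,S_{k-1}\,E_{ab}(1)$ (where $E_{ij}(\lambda)$ is $I_r$ with the $(i,j)$ entry replaced by $\lambda$) to express $T(t)=S_r$ as a nested product of $2r-3$ $R$-step matrices. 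Your sweep/unsweep is essentially this conjugation unwound along a path rather than a tree: the forward sweep plays the role of the right-hand factors $E_{ab}(1)$ and the backward sweep the left-hand factors $E_{ab}(-1)$. The trade-off is that the paper's version avoids the auxiliary Hamiltonian-path lemma entirely (your inductive sketch of which is fine, though the inductive hypothesis should be phrased for arbitrary generating subsets of $\Z_d$, since the image of $R''$ in the quotient need not itself be of the form $R\setminus\{r\}$ for some admissible $R$), while your version makes the row-operation bookkeeping completely linear and explicit.
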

\begin{proof}
We inductively define a sequence of subsets, 
	$$ U_2 \subset U_3 \subset U_4 \subset \cdots \subset U_r = [r], $$ 
in the following manner. Let $U_2 = \{ x , t \}$ where $x \in [r]$ is chosen so that $t-x \; (\mbox{mod } r) \in R$. For $k = 3,4,5, \ldots,r$, iteratively define $U_k = U_{k-1} \cup \{ b \}$ for some $b \in [r] \backslash U_{k-1}$ such that there exists some $a \in U_{k-1}$ such that $a-b \; (\mbox{mod } r) \in R$. We know $a$ and $b$ exist because $\gcd(R)=1$. Now let $S_k$ be the matrix formed from $I_r$ by replacing the $t^{\mbox{\scriptsize th}}$ row with 
	$$ (x_1,x_2,x_3, \ldots ,x_n) \quad \mbox{where} \quad 
	x_i = \begin{cases} -1 & : \mbox{ if } i \in U_k \\ 0 & : \mbox{ otherwise}. \end{cases} $$ 
Now we prove that $S_k$ can be expressed as a product of $(2k-3)$ $R$-step matrices, by induction on $k=2,3,4, \ldots ,r$. For the base case ($k=2$), $S_2$ is a $R$-step matrix. For the inductive step, 
	$$ S_k = E_{ab}(-1) S_{k-1} E_{ab}(1) $$
where $E_{ij}(\lambda)$ is the matrix formed from $I_r$ by replacing the $ij^{\mbox{\scriptsize th}}$ entry with $\lambda$. Note that $E_{ab}(1)$ and $E_{ab}(-1)$ are both $R$-step matrices because $a-b \: (\mbox{mod } r) \in R$, and $S_{k-1}$ can be expressed as a product of $(2(k-1)-3)$ $R$-step matrices by the inductive assumption. Therefore $S_k$ can be expressed as a product of 
	$$ 1 + (2(k-1)-3) + 1 = 2k-3 \;\;\; \mbox{ $R$-step matrices.} $$ 
This completes the induction. For $k=r$ we have $U_r = \{ 1,2, \ldots ,r \}$ and so the $t$-toggle matrix is $T(t)=S_r$, which can be expressed as a product of $(2r-3)$ $R$-step matrices.
\end{proof}

\begin{prop}
\label{prop:toggle2permutation}
Any $r \times r$ permutation matrix can be expressed as the product of at most $\frac{3r}{2}$ toggle matrices.
\end{prop}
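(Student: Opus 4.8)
The plan is to build an arbitrary $r\times r$ permutation matrix as a product of toggle matrices by factoring the permutation into transpositions of a particularly convenient form, and then realising each such transposition cheaply using toggles. First I would record the key algebraic fact about the toggle matrix $T(t)$: acting on column vectors, $T(t)$ fixes every coordinate except the $t$-th, and replaces $x_t$ by $-\sum_i x_i$. Equivalently, as a linear map on $\Z_s^r$, $T(t)$ is the identity on the hyperplane $\{x_t = 0\}$... more usefully, $T(t)^2 = I_r$ (it is an involution: applying it twice sends $x_t \mapsto -\sum x_i$ and then the new sum is $x_t$ again since only the $t$-th coordinate changed), so toggles are their own inverses. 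The real engine is the identity expressing an adjacent-type transposition via toggles; I would aim to show that for any $t \neq t'$, the transposition $(t\ t')$ (as a permutation matrix) equals $T(t)\,T(t')\,T(t)$ or some similarly short word — one checks directly that $T(t)T(t')T(t)$ swaps coordinates $t$ and $t'$ and fixes the rest, because the two sign-flips and the sum-coordinate interact so as to cancel. That gives every transposition as a product of $3$ toggles, hence every permutation as a product of $3(r-1)$ toggles, which is too many; the point of the $\tfrac{3r}{2}$ bound is to be smarter.

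To get down to $\tfrac{3r}{2}$, I would decompose the permutation into its disjoint cycles and handle a $k$-cycle all at once rather than as $k-1$ separate transpositions. The natural approach: show a $k$-cycle on coordinates $\{t_1,\dots,t_k\}$ can be written as a product of about $\tfrac{3k}{2}$ toggles. One way is to observe that a single toggle $T(t)$, composed with a permutation matrix, has the effect of "rotating" partial sums, and that alternating toggles on two coordinates can manufacture a transposition in $3$ toggles while a clever overlap lets consecutive transpositions in a cycle share a toggle, cutting the cost per transposition from $3$ towards $3/2$ asymptotically. Concretely I expect the bookkeeping to show a $k$-cycle costs at most $\lceil 3k/2\rceil$ or so toggles (with small cycles, especially $2$-cycles, being the tight case: a transposition needs $3$ toggles $= 3\cdot 2/2$), and then summing over the disjoint cycles — whose lengths sum to at most $r$ — yields the total bound $\tfrac{3r}{2}$. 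I would state the per-cycle bound as a sublemma, prove it by an explicit word in the toggles, and then conclude by multiplying the words for the disjoint cycles (which commute, being supported on disjoint coordinate sets, so there is no interference).

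The main obstacle I anticipate is getting the per-cycle toggle count genuinely down to $3k/2$ rather than $3(k-1)$ or $3k$: one has to find an explicit factorisation of a long cycle into toggles that exploits sharing between consecutive steps, and verifying such a word really equals the cycle (and not the cycle composed with some stray sign-flip or extra transposition, since toggles are not permutation matrices and their products land outside $S_r$ unless things cancel exactly) requires careful tracking of how the "sum coordinate" propagates. A safe fallback, if the sharing trick is delicate, is to note that $2$-cycles already meet the $3k/2$ bound with equality, and to handle odd-length and even-length cycles by pairing transpositions appropriately; I would verify the small cases ($k=2,3$) by hand to anchor the induction. Once the cycle lemma is in place the rest is routine: disjointness of supports makes the final product formula immediate, and $\sum k_i \le r$ gives $\sum \tfrac{3k_i}{2} \le \tfrac{3r}{2}$.
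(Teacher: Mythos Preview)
Your overall strategy---decompose the permutation into disjoint cycles, bound the toggle cost per cycle, then sum---is exactly the paper's. You also correctly note that $T(t)^2=I_r$ and that a transposition $(t\ t')$ equals $T(t)T(t')T(t)$. The gap is that you never exhibit an explicit word achieving your target of about $3k/2$ toggles for a $k$-cycle; the ``share a toggle between consecutive transpositions'' heuristic, as stated, does not by itself pin down such a word, and a naive chain-sharing along $(a_1\,a_2)(a_2\,a_3)\cdots$ does not obviously reach $3k/2$.

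The paper supplies the missing piece with a single identity: the $k$-cycle $(a_1,a_2,\ldots,a_k)$ equals
\[
T(a_1)\,T(a_2)\,T(a_3)\cdots T(a_k)\,T(a_1),
\]
a product of exactly $k+1$ toggles---in fact better than $3k/2$ for every $k\ge 2$. You can recover this from your own ingredients by writing the cycle as a product of $k-1$ transpositions \emph{all through the common index $a_1$}, expressing each as $T(a_1)T(a_i)T(a_1)$, and cancelling every internal $T(a_1)T(a_1)$. With $k+1$ toggles per nontrivial cycle, the total is
\[
\sum_i(k_i+1)=\Bigl(\sum_i k_i\Bigr)+(\text{number of nontrivial cycles})\le r+\tfrac{r}{2}=\tfrac{3r}{2},
\]
since each nontrivial cycle has length at least $2$. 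Once this formula is in hand, the rest of your outline goes through unchanged.
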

\begin{proof}
First we show that an arbitrary $k$-cycle can be expressed as the product of $k+1$ toggle matrices. Explicitly, if $Q$ is the $r \times r$ matrix corresponding to the $k$ cycle, $(a_1,a_2, \ldots ,a_k)$, can be expressed as the product 
	$$ Q = T(a_1) T(a_2) T(a_3) \cdots T(a_{k-1}) T(a_k) T(a_1). $$
Now consider the cyclic decomposition of the permutation; the permutation expressed as the composition of at most $n/2$ cycles, such that the sum of the lengths of these cycles is at most $n$. If each of these cycles are expressed as a product of toggle matrices, then this is 
\end{proof}

\begin{thm}
\label{thm:large_n}
For any finite $R \subset \Z^+$, the network $N_n(R)$ is linearly $s$-solvable if and only if the identity matrix can be expressed as a product of $n$ $R$-atomic matrices with entries in $\Z_s$.
\end{thm}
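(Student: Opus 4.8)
The plan is to prove both directions by relating the $R$-circuit matrix $\M_F$ of a linear circuit to a product of $R$-atomic matrices read off from consecutive windows of $r$ rows. Throughout, write $\omega(1),\omega(2),\ldots,\omega(n+r)$ for the rows of an $(n+r)\times r$ matrix $M$ over $\Z_s$, and for each index $k$ let $W(k)$ be the $r\times r$ matrix whose rows are $\omega(k-r+1),\ldots,\omega(k)$ (the ``window'' ending at row $k$). The key observation is that the recurrence $\omega(k)=\sum_{j\in R}\lambda_{kj}\,\omega(k-j)$ (with $\lambda_{kj}=0$ when $j\notin R$) is exactly the statement that $W(k)=A_k\,W(k-1)$, where $A_k$ is the $R$-atomic matrix with bottom row $(\alpha_r,\alpha_{r-1},\ldots,\alpha_1)=(\lambda_{k,r},\lambda_{k,r-1},\ldots,\lambda_{k,1})$: multiplying $W(k-1)$ on the left by such a matrix shifts rows $\omega(k-r+1),\ldots,\omega(k-1)$ up by one and places $\sum_{j\in R}\lambda_{kj}\omega(k-j)$ in the bottom slot, which is precisely $\omega(k)$. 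The constraint $\alpha_j=0$ for $j\notin R$ on an $R$-atomic matrix is exactly the constraint that $\omega(k)$ only uses rows $\omega(k-j)$ with $j\in R$.

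For the forward direction, suppose $N_n(R)$ is linearly $s$-solvable, witnessed by a linear circuit $F$; by Definition~\ref{defn:M_F} and Definition~\ref{defn:ifp}, $M_F$ has $I_r$ as its first $r$ rows and also as its last $r$ rows. Reading off the atomic matrices $A_{r+1},A_{r+2},\ldots,A_{n+r}$ from the coefficients of the circuit (using that each $f_k$ is linear, so such coefficients exist, with those outside $R$ forced to be $0$), the chain of window identities gives $W(n+r)=A_{n+r}A_{n+r-1}\cdots A_{r+1}\,W(r)$. But $W(r)$ consists of the first $r$ rows of $M_F$, which is $I_r$, and $W(n+r)$ consists of the last $r$ rows of $M_F$, which is also $I_r$. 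Hence $I_r=A_{n+r}\cdots A_{r+1}$ is a product of exactly $n$ $R$-atomic matrices over $\Z_s$, as required.

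For the converse, suppose $I_r=A_{n+r}A_{n+r-1}\cdots A_{r+1}$ is a product of $n$ $R$-atomic matrices over $\Z_s$. Build $M$ by setting its first $r$ rows to $I_r$, i.e.\ $W(r)=I_r$, and then defining $W(k)=A_k\,W(k-1)$ for $k=r+1,\ldots,n+r$; since each $A_k$ is $R$-atomic, the top $r-1$ rows of $A_k W(k-1)$ are the bottom $r-1$ rows of $W(k-1)$, so the windows overlap consistently and this unambiguously defines rows $\omega(r+1),\ldots,\omega(n+r)$, each of which is a linear combination of $\{\omega(k-j):j\in R\}$ (because the $\alpha_j$ vanish off $R$). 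Thus $M$ meets the two hypotheses of Proposition~\ref{prop:M_F}, so there is a circuit $F$ on $N_n(R)$ with $M_F=M$; and $M$ is automatically linear since all rows past the first $r$ are built by linear combinations. Finally the last $r$ rows of $M$ form $W(n+r)=A_{n+r}\cdots A_{r+1}\,I_r=I_r$, so $F$ linearly $s$-solves $N_n(R)$.

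The only subtle point — and the step I expect to require the most care — is the bookkeeping that the window identity $W(k)=A_k W(k-1)$ correctly encodes the clock-network recurrence, including the boundary indexing: one must check that for $k\le 2r$ the ``missing'' rows with index $\le 0$ never actually appear (they would correspond to $j>k\ge r$, impossible since $j\le r$), so that $W(r)=I_r$ genuinely seeds the recursion, and that the atomic-matrix shift lines up with ``$\omega(k)$ depends on $\omega(k-j)$ for $j\in R$'' rather than some reversed or off-by-one indexing. Once that dictionary between products of $R$-atomic matrices and valid $R$-circuit matrices is nailed down, both implications are immediate from Proposition~\ref{prop:M_F} and the definitions.
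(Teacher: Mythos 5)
Your proof is correct and follows essentially the same route as the paper: the paper tracks sliding length-$r$ windows of the valuation ($Y_i = A_iY_{i-1}$ with $A_i$ the $R$-atomic matrix encoding $f_{i+r}$) and concludes that the product of all $n$ atomic matrices must fix every input and hence equal $I_r$, which is just your window identity $W(k)=A_kW(k-1)$ applied to an input vector $c^T$. Your version, phrased on the rows of $M_F$ and closed off via Proposition~\ref{prop:M_F}, is a faithful (and if anything slightly more explicit) rendering of the same argument, including the converse.
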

\begin{proof}
For any valuation, $X = (X_1,X_2, \ldots ,X_n)$, of a linear circuit on $N_n(R)$, let $Y_i$ denote the column vector $Y_i = (X_{i+1},X_{i+2},X_{i+3}, \ldots ,X_{i+r})^T$ for $i=0,1,2, \ldots ,n$. Since $f_k$ is linear, 
	$$ X_k = f_k\left( X_{k-j} \: | \: j \in R \right) = \sum_{j \in R} \alpha_{kj} X_{k-j}. $$
Therefore we must have $Y_i = A_iY_{i-1}$ where $A_i$ is a $R$-atomic matrix. Inductively this implies that $Y_i = A_iA_{i-1} \ldots A_2A_1Y_0$ for all $i \geq 0$ and thus $Y_n = \A Y_0$, where $\A = A_nA_{n-1} \cdots A_2A_1$ is a product of $n$ $R$-atomic matrices. For all inputs $c \in \Z_s^r$ we have 
	$$ F(c) = Y_n = \A Y_0 = \A c. $$
So if $X$ is the valuation of a circuit which linearly $s$-solved $N_n(R)$, then $c = \A c$ for all $c \in \Z_s^r$. Hence $\A=I_r$, and $\A$ is a product of exactly $n$ $R$-atomic matrices. Conversely the function $f_k$ can be reconstructed from the $R$-atomic matrix $A_{k-r}$, for each $k=r+1,r+2, \ldots ,n+r$, so this construction is reversible. 
\end{proof}
\begin{cor}
\label{cor:large_n}
Let $R$ be any finite set of positive integers with $\gcd(R) = 1$ and let $r = \max(R)$. For any $n \geq = 3r^3$, the network $N_n(R)$ is linearly $s$-solvable for any integer $s \geq 2$.
\end{cor}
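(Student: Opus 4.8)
The plan is to chain together the four preceding propositions to express $I_r$ as a product of $n$ $R$-atomic matrices whenever $n \geq 3r^3$, and then invoke Theorem~\ref{thm:large_n}. The skeleton is: $I_r$ is (trivially) a permutation matrix, so by Proposition~\ref{prop:toggle2permutation} it is a product of at most $\tfrac{3r}{2}$ toggle matrices; by Proposition~\ref{prop:step2toggle} each toggle matrix is a product of exactly $2r-3$ $R$-step matrices; and by Proposition~\ref{prop:atomic2step} each $R$-step matrix is a product of exactly $r$ $R$-atomic matrices. Multiplying through, $I_r$ is a product of at most $\tfrac{3r}{2}(2r-3)r \leq 3r^3$ $R$-atomic matrices. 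The case $r=1$ is degenerate (then $R=\{1\}$ and $N_n(\{1\})$ is trivially solvable, or one checks $3r^3=3$ directly), and $r>1$ is guaranteed to be the relevant case once $\gcd(R)=1$ and $|R|\geq 2$.

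The one genuine gap is that the count above gives \emph{at most} $3r^3$ atomic matrices, whereas Theorem~\ref{thm:large_n} requires $I_r$ to be a product of \emph{exactly} $n$ of them for the specific value $n$ at hand. So the key step is a padding lemma: if $I_r$ is a product of $m$ $R$-atomic matrices, then it is a product of $m+r$ $R$-atomic matrices. This follows because the unique $R$-atomic permutation matrix $P$ (with $\alpha_r=1$, used in the proof of Proposition~\ref{prop:atomic2step}) is the cyclic shift, and $P^r = I_r$; hence inserting $r$ copies of $P$ into any atomic product leaves it unchanged. Therefore once $I_r$ is a product of $m_0 \leq 3r^3$ atomic matrices, it is a product of $m_0 + kr$ atomic matrices for every $k \geq 0$, i.e.\ of every sufficiently large integer congruent to $m_0$ modulo $r$.

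To get \emph{every} $n \geq 3r^3$ rather than just one residue class, I would run the construction so as to hit all residues. The cleanest way: note that $P$ itself is $R$-atomic and $P^r = I_r$, so for each $j$ with $0 \leq j < r$, the matrix $P^j$ is a product of $j$ atomic matrices, and $I_r = P^{r-j} \cdot P^j$ exhibits $I_r$ as being reachable after first "using up" $j$ atomic steps; more usefully, I can take the length-$m_0$ factorization of $I_r$ and, for the target $n$, write $n = m_0 + (\text{nonneg.\ multiple of }r) + j$ only if $j$ is controllable — which it is not directly. Instead, the right move is to build, for each residue $\rho \in \{0,1,\dots,r-1\}$, a short atomic factorization of $I_r$ of length $\equiv \rho \pmod r$: start from the length-$m_0$ factorization (some fixed residue $\rho_0$) and prepend between $0$ and $r-1$ extra copies of $P$ \emph{followed by} enough copies of $P$ to complete a full $P^r=I_r$ block — concretely, $\ell$ copies of $P$ contribute length $\ell$ while $\lceil \ell/r\rceil r - \ell$ further copies of $P$ restore the identity, so any length in $\{m_0, m_0+1, \dots, m_0 + 2r\}$ that is achievable this way covers a full residue system once $2r \geq r$, i.e.\ always. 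Since $m_0 + 2r \leq 3r^3$ for $r \geq 2$, every $n \geq 3r^3$ can be written as one of these short factorization lengths plus a nonnegative multiple of $r$, giving an atomic factorization of $I_r$ of length exactly $n$. By Theorem~\ref{thm:large_n}, $N_n(R)$ is linearly $s$-solvable for all $s \geq 2$, which is the claim. The main obstacle is purely this bookkeeping of lengths modulo $r$; the algebraic content is entirely supplied by Propositions~\ref{prop:atomic2step}--\ref{prop:toggle2permutation} and the observation $P^r = I_r$.
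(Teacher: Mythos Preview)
Your overall architecture is right, and you correctly isolate the one nontrivial issue: Theorem~\ref{thm:large_n} needs a factorisation of $I_r$ into \emph{exactly} $n$ $R$-atomic matrices, while chaining Propositions~\ref{prop:atomic2step}--\ref{prop:toggle2permutation} on $I_r$ only gives a length that is a multiple of $r$ (each $R$-step matrix contributes exactly $r$ atomic factors). Padding by blocks $P^r=I_r$ then reaches only lengths congruent to that fixed residue modulo $r$. Your attempt to cover the other residues does not work: prepending $\ell$ copies of $P$ and then ``completing the block'' with $\lceil \ell/r\rceil r-\ell$ further copies adds a total of $\lceil \ell/r\rceil r$ atomic factors, which is again a multiple of $r$, so the residue never changes. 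The asserted range $\{m_0,m_0+1,\dots,m_0+2r\}$ is not actually achieved by your construction; you only get $m_0$ and $m_0+r$.

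The paper's device is to apply Propositions~\ref{prop:atomic2step}--\ref{prop:toggle2permutation} not to $I_r$ but to the permutation matrix $Q=P^{-n}$. This yields $Q$ as a product of $kr(2r-3)$ atomic matrices with $k\le\tfrac{3r}{2}$, hence of length at most $3r^3$; then
\[
Q\cdot P^{\,n-kr(2r-3)}=P^{-n}\cdot P^{\,n-kr(2r-3)}=(P^r)^{-k(2r-3)}=I_r,
\]
which is a product of exactly $kr(2r-3)+\bigl(n-kr(2r-3)\bigr)=n$ atomic matrices. Choosing the permutation to depend on $n$ is what makes the residue bookkeeping disappear; once you factor $P^{-n}$ instead of $I_r$, the padding by powers of $P$ lands on $I_r$ automatically at the right length.
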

\begin{proof}
It suffices to show that the identity matrix can be expressed as a product of exactly $n$ $R$-atomic matrices. Let $P$ denote the only atomic matrix which is also a permutation matrix; the atomic matrix for which $\alpha_r=1$ and $\alpha_i = 0$ for all $i<r$. Note that $P$ is a $r$-cycle and so $P^r = I_r$. Let $Q=P^{-n}$ and note that $Q$ is a permutation matrix. By Proposition~\ref{prop:toggle2permutation}, we can write $Q$ as a product of $k \leq \frac{3r}{2}$ toggle matrices. By Proposition~\ref{prop:step2toggle}, we can write each of these toggle matrices as a product of $(2r-3)$ $R$-step matrices, and by Proposition~\ref{prop:atomic2step} we can write each of these step matrices as a product of $r$ $R$-atomic matrices. Therefore $Q$ can be expressed as a product of $kr(2r-3)$ $R$-atomic matrices. Since $k \leq \frac{3r}{2}$ and $n \geq 3r^3 > \tfrac{3r}{2} r(2r-3)$, we must have $n-kr(2r-3) \geq 0$ and so 
	$$ Q \times P^{n-kr(2r-3)} = \left( Q \times P^n \right) \times \left( P^r \right)^{k(2r-3)} = I_r. $$
Since $P$ is a $R$-atomic matrix and $Q$ can be expressed as a product of $kr(2r-3)$ $R$-atomic matrices, we can express $I_r$ as a product of $n$ $R$-atomic matrices.
\end{proof}

For finite $R \subset \Z^+$ with $\gcd(R)=1$, let $n_0 = n_0(R)$ be the minimum integer such that $N_n(R)$ is $s$-solvable for all $s \geq 2$ for all $n \geq n_0$. Corollary~\ref{cor:large_n} shows that $n_0$ is well defined and that $n_0 \leq 3r^3$ (where $\max(R)=r$). Theorem~\ref{thm:full_clock} shows that $n_0([r]) = r$ and in Section~\ref{sec:analysis_of_R=1,3}, we deduced that $n_0(\{1,3\}) = 12$. The value $n_0(R)$ for $R$ in general, remains an open question. We conclude this section with an example which demonstrates that the cubic bound $n_0 \leq 3r^3$ in Corollary~\ref{cor:large_n} cannot be replaced with any bound less than $r^2-r$. 

\begin{ex}
For any integer $r \geq 3$, let $n = r^2-r-1$ and consider the networks $N = N_n(\{1,r\})$ and $N^\prime = N_n(\{1,r-1\})$, and consider the digraphs $G = G_{clock}(n,\{1,r\})$ and $G^\prime = G_{clock}(n,\{1,r-1\})$. Using Theorem~\ref{thm:information_defect}, the existence of the acyclic network $N^\prime$ of width $r-1$ such that $G^\prime = G_{N^\prime}$ implies that 
	$$ \gn(G^\prime,s) \leq r-1. $$
Note that $r(r-1) \equiv 1$ modulo $n$, so $G$ and $G^\prime$ are isomorphic, and so $\gn(G,s) = \gn(G^\prime,s) < r$. Now using Theorem~\ref{thm:information_defect} again (since $G = G_N$) we can conclude that $N$ is not $s$-solvable. Thus 
	$$ n_0(\{1,r\}) \geq r^2-r, $$
for all $r \geq 3$.
\end{ex}

\bibliography{information_flow}
\bibliographystyle{plain}

\end{document}